\newdimen\proofrulebreadth \proofrulebreadth=.05em
\newdimen\proofdotseparation \proofdotseparation=1.25ex
\newdimen\proofrulebaseline \proofrulebaseline=2ex
\let\then\relax
\def\hfi{\hskip0pt plus.0001fil}
\mathchardef\squigto="3A3B
\newif\ifinsideprooftree\insideprooftreefalse
\newif\ifonleftofproofrule\onleftofproofrulefalse
\newif\ifproofdots\proofdotsfalse
\newif\ifdoubleproof\doubleprooffalse
\let\wereinproofbit\relax
\newdimen\shortenproofleft
\newdimen\shortenproofright
\newdimen\proofbelowshift
\newbox\proofabove
\newbox\proofbelow
\newbox\proofrulename
\def\shiftproofbelow{\let\next\relax\afterassignment\setshiftproofbelow\dimen0 }
\def\shiftproofbelowneg{\def\next{\multiply\dimen0 by-1 }%
\afterassignment\setshiftproofbelow\dimen0 }
\def\setshiftproofbelow{\next\proofbelowshift=\dimen0 }
\def\setproofrulebreadth{\proofrulebreadth}
\def\prooftree{
%
\ifnum  \lastpenalty=1
\then   \unpenalty
\else   \onleftofproofrulefalse
\fi
%
\ifonleftofproofrule
\else   \ifinsideprooftree
        \then   \hskip.5em plus1fil
        \fi
\fi
%
\bgroup
\setbox\proofbelow=\hbox{}\setbox\proofrulename=\hbox{}%
\let\justifies\proofover\let\leadsto\proofoverdots\let\Justifies\proofoverdbl
\let\using\proofusing\let\[\prooftree
\ifinsideprooftree\let\]\endprooftree\fi
\proofdotsfalse\doubleprooffalse
\let\thickness\setproofrulebreadth
\let\shiftright\shiftproofbelow \let\shift\shiftproofbelow
\let\shiftleft\shiftproofbelowneg
\let\ifwasinsideprooftree\ifinsideprooftree
\insideprooftreetrue
%
\setbox\proofabove=\hbox\bgroup$\displaystyle 
\let\wereinproofbit\prooftree
%
\shortenproofleft=0pt \shortenproofright=0pt \proofbelowshift=0pt
%
\onleftofproofruletrue\penalty1
}
\def\eproofbit{
%
\ifx    \wereinproofbit\prooftree
\then   \ifcase \lastpenalty
        \then   \shortenproofright=0pt  
        \or     \unpenalty\hfil         
        \or     \unpenalty\unskip       
        \else   \shortenproofright=0pt  
        \fi
\fi
%
\global\dimen0=\shortenproofleft
\global\dimen1=\shortenproofright
\global\dimen2=\proofrulebreadth
\global\dimen3=\proofbelowshift
\global\dimen4=\proofdotseparation
\global\count255=\proofdotnumber
%
$\egroup  
%
\shortenproofleft=\dimen0
\shortenproofright=\dimen1
\proofrulebreadth=\dimen2
\proofbelowshift=\dimen3
\proofdotseparation=\dimen4
\proofdotnumber=\count255
}
\def\proofover{
\eproofbit 
\setbox\proofbelow=\hbox\bgroup 
\let\wereinproofbit\proofover
$\displaystyle
}%
\def\proofoverdbl{
\eproofbit 
\doubleprooftrue
\setbox\proofbelow=\hbox\bgroup 
\let\wereinproofbit\proofoverdbl
$\displaystyle
}%
\def\proofoverdots{
\eproofbit 
\proofdotstrue
\setbox\proofbelow=\hbox\bgroup 
\let\wereinproofbit\proofoverdots
$\displaystyle
}%
\def\proofusing{
\eproofbit 
\setbox\proofrulename=\hbox\bgroup 
\let\wereinproofbit\proofusing
\kern0.3em$
}
\def\endprooftree{
\eproofbit 
  \dimen5 =0pt
%
\dimen0=\wd\proofabove \advance\dimen0-\shortenproofleft
\advance\dimen0-\shortenproofright
%
\dimen1=.5\dimen0 \advance\dimen1-.5\wd\proofbelow
\dimen4=\dimen1
\advance\dimen1\proofbelowshift \advance\dimen4-\proofbelowshift
%
\ifdim  \dimen1<0pt
\then   \advance\shortenproofleft\dimen1
        \advance\dimen0-\dimen1
        \dimen1=0pt
        \ifdim  \shortenproofleft<0pt
        \then   \setbox\proofabove=\hbox{%
                        \kern-\shortenproofleft\unhbox\proofabove}%
                \shortenproofleft=0pt
        \fi
\fi
%
\ifdim  \dimen4<0pt
\then   \advance\shortenproofright\dimen4
        \advance\dimen0-\dimen4
        \dimen4=0pt
\fi
%
\ifdim  \shortenproofright<\wd\proofrulename
\then   \shortenproofright=\wd\proofrulename
\fi
%
\dimen2=\shortenproofleft \advance\dimen2 by\dimen1
\dimen3=\shortenproofright\advance\dimen3 by\dimen4
%
\ifproofdots
\then
        \dimen6=\shortenproofleft \advance\dimen6 .5\dimen0
        \setbox1=\vbox to\proofdotseparation{\vss\hbox{$\cdot$}\vss}%
        \setbox0=\hbox{%
                \advance\dimen6-.5\wd1
                \kern\dimen6
                $\vcenter to\proofdotnumber\proofdotseparation
                        {\leaders\box1\vfill}$%
                \unhbox\proofrulename}%
\else   \dimen6=\fontdimen22\the\textfont2 
        \dimen7=\dimen6
        \advance\dimen6by.5\proofrulebreadth
        \advance\dimen7by-.5\proofrulebreadth
        \setbox0=\hbox{%
                \kern\shortenproofleft
                \ifdoubleproof
                \then   \hbox to\dimen0{%
                        $\mathsurround0pt\mathord=\mkern-6mu%
                        \cleaders\hbox{$\mkern-2mu=\mkern-2mu$}\hfill
                        \mkern-6mu\mathord=$}%
                \else   \vrule height\dimen6 depth-\dimen7 width\dimen0
                \fi
                \unhbox\proofrulename}%
        \ht0=\dimen6 \dp0=-\dimen7
\fi
%
\let\doll\relax
\ifwasinsideprooftree
\then   \let\VBOX\vbox
\else   \ifmmode\else$\let\doll=$\fi
        \let\VBOX\vcenter
\fi
\VBOX   {\baselineskip\proofrulebaseline \lineskip.2ex
        \expandafter\lineskiplimit\ifproofdots0ex\else-0.6ex\fi
        \hbox   spread\dimen5   {\hfi\unhbox\proofabove\hfi}%
        \hbox{\box0}%
        \hbox   {\kern\dimen2 \box\proofbelow}}\doll%
%
\global\dimen2=\dimen2
\global\dimen3=\dimen3
\egroup 
\ifonleftofproofrule
\then   \shortenproofleft=\dimen2
\fi
\shortenproofright=\dimen3
%
\onleftofproofrulefalse
\ifinsideprooftree
\then   \hskip.5em plus 1fil \penalty2
\fi
}
\title{The differential calculus of causal functions}
\titlerunning{The differential calculus of causal functions}
\author{David Sprunger}{National Institute of Informatics, Tokyo JP}{sprunger@nii.ac.jp}{}{This author is supported by ERATO HASUO Metamathematics for Systems Design Project (No.\ JPMJER1603), JST.}
\author{Bart Jacobs}{Radboud University, Nijmegen NL}{bart@cs.ru.nl}{}{}
\authorrunning{D. Sprunger and B. Jacobs}
\keywords{sequences, causal functions, derivatives, recurrent neural networks, Elman networks}
\newcommand{\R}{\mathbb{R}}
\newcommand{\w}{\omega}
\newcommand{\x}{\times}
\newcommand{\pdt}[2]{\frac{\partial #1}{\partial #2}}
\newcommand{\tuple}[1]{\langle#1\rangle}
\newcommand{\lsem}{[\![}
\newcommand{\rsem}{]\!]}
\newcommand{\sem}[1]{\lsem #1\rsem}
\newcommand{\Bf}[1]{{\bf #1}}
\newcommand{\mc}{\mathcal}
\newcommand{\Set}{\Bf{Set}}
\newcommand{\final}[1]{{!_{#1}}}
\newcommand{\teq}{\mathbin{\triangleq}}
\def\squD{\mc D}    
\def\seqD{\squD^*}  
\def\hd{{\tt hd}}
\def\tl{{\tt tl}}
\def\id{{\rm id}}
\def\rec{{\tt rec}}
\def\map{{\tt map}}
\def\inv{^{-1}}
\begin{document}

\maketitle

\begin{abstract}

Causal functions of sequences occur throughout computer science, from theory
to hardware to machine learning. Mealy machines, synchronous digital circuits,
signal flow graphs, and recurrent neural networks all have behaviour that can
be described by causal functions. In this work, we examine a differential
calculus of causal functions which includes many of the familiar properties of
standard multivariable differential calculus. These causal functions operate
on infinite sequences, but this work gives a different notion of an
infinite-dimensional derivative than either the Fréchet or Gateaux derivative
used in functional analysis. In addition to showing many standard properties
of differentiation, we show causal differentiation obeys a unique recurrence
rule. We use this recurrence rule to compute the derivative of a simple
recurrent neural network called an Elman network by hand and describe how the
computed derivative can be used to train the network.

\end{abstract}

\section{Introduction}

Many computations on infinite data streams operate in a \emph{causal} manner,
meaning their $k$\textsuperscript{th} output depends only on the first $k$
inputs. Mealy machines, clocked digital circuits, signal flow graphs,
recurrent neural networks, and discrete time feedback loops in control theory
are a few examples of systems performing such computations. When designing
these kinds of systems to fit some specification, a common issue is figuring
out how adjusting one part of the system will affect the behaviour of the
whole. If the system has some real-valued semantics, as is especially common
in machine learning or control theory, the derivative of these semantics with
respect to a quantity of interest, say an internal parameter, gives a
locally-valid first-order estimate of the system-wide effect of a small change
to that quantity. Unfortunately, since the most natural semantics for infinite
data streams is in an infinite-dimensional vector space, it is not practical
to use the resulting infinite-dimensional derivative.

To get around this, one tactic is to replace the infinite system by a finite
system obtained by an approximation or heuristic and take derivatives of the
replacement system. This can be seen, for example, in \emph{backpropagation
through time}~\cite{bptt}, which trains a recurrent neural network by first
unrolling the feedback loop the appropriate number of times and then applying
traditional backpropagation to the unrolled network.

This tactic has the advantage that we can take derivatives in a familiar
(finite-dimensional) setting, but the disadvantage that it is not clear what
properties survive the approximation process from the unfamiliar
(infinite-dimensional) setting. For example, it is not immediately clear
whether backpropagation through time obeys the usual rules of differential
calculus, like a sum or chain rule, nor is this issue confronted in the
literature, to the best of our knowledge. Thus, useful compositional
properties of differentiation are ignored in exchange for a comfortable
setting in which to do calculus.

In this work, we take advantage of the fact that causal functions between
sequences are already essentially limits of finite-dimensional functions and
therefore have derivatives which can also be expressed as essentially limits
of the derivatives of these finite-dimensional functions. This leads us to the
basics of a differential calculus of causal functions. Unlike \emph{arbitrary}
functions between sequences, this limiting process allows us to avoid the use
of normed vector spaces, and so we believe our notion of derivative is
distinct from Fréchet derivatives.

\textbf{Outline.} In section~\ref{sec:causalDef}, we define causal functions
and recall several mechanisms by which these functions on infinite data can be
defined. In particular, we recall a coalgebraic scheme finding causal
functions as the behaviour of Mealy machines
(proposition~\ref{prop:coalgebraicDef}), and give a definitional scheme in
terms of so-called \emph{finite approximants}
(definition~\ref{def:finiteApproxs}). In section~\ref{sec:causalDiff}, we
define differentiability and derivatives of causal functions on real-vector
sequences (definition~\ref{def:causalDiff}) and compute several examples. In
section~\ref{sec:rules}, we obtain several rules for our differential causal
calculus analogous to those of multivariable calculus, including a chain rule,
parallel rule, sum rule, product rule, reciprocal rule, and quotient rule
(propositions \ref{prop:causalChainRule}, \ref{prop:causalParallelRule},
\ref{prop:causalSumRule}, \ref{prop:causalProductRule},
\ref{prop:causalReciprocalRule}, and \ref{prop:causalQuotientRule},
respectively). We additionally find a new rule without a traditional analogue
we call the recurrence rule (theorem~\ref{thm:causalRecurrenceRule}). Finally,
in section~\ref{sec:elman}, we apply this calculus to find derivatives of a
simple kind of recurrent neural network called an Elman
network~\cite{Elman_1990} by hand. We also demonstrate how to use the
derivative of the network with respect to a parameter to guide updates of that
parameter to drive the network towards a desired behaviour.

\section{Causal functions of sequences}\label{sec:causalDef}

A \textit{sequence} or \textit{stream} in a set $A$ is a countably infinite
list of values from $A$, which we also think of as a function from the natural
numbers $\w$ to $A$. If $\sigma$ is a stream in $A$, we denote its value at $k
\in \w$ by $\sigma_k$. We may also think of a stream as a listing of its
image, like $\sigma = (\sigma_0, \sigma_1, \ldots)$. The set of all sequences
in $A$ is denoted $A^\w$.

Given $a \in A$ and $\sigma \in A^\w$, we can form a new sequence by
prepending $a$ to $\sigma$. The sequence $a:\sigma$ is defined by
$(a:\sigma)_0 = a$ and $(a:\sigma)_{k+1} = \sigma_k$. This operation can be
extended to prepend arbitrary finite-length words $w \in A^*$ by the obvious
recursion. Conversely, we can destruct a given sequence into an element and a
second sequence with functions $\hd: A^\w \to A$ and $\tl: A^\w \to A^\w$
defined by $\hd(\sigma) = \sigma_0$ and $\tl(\sigma)_k = \sigma_{k+1}$.

\begin{definition}[slicing]
  If $\sigma \in A^\w$ is a stream and $j \leq k$ are natural numbers, the
  \emph{slicing} $\sigma_{j:k}$ is the list $(\sigma_j, \sigma_{j+1}, \ldots,
  \sigma_k) \in A^{k-j+1}$.
\end{definition}

\begin{definition}[causal function]
  A function $f: A^\w\to B^\w$ is \emph{causal} means $\sigma_{0:k} =
  \tau_{0:k}$ implies $f(\sigma)_{0:k} = f(\tau)_{0:k}$ for all $\sigma, \tau
  \in A^\w$ and $k \in \w$.
\end{definition}

\subsection{Causal functions via coalgebraic finality}

A standard coalgebraic approach to causal functions is to view them as the
behaviour of Mealy machines.

\begin{definition}[Mealy functor]
  Given two sets $A, B$, the functor $M_{A,B}: \Set\to\Set$ is defined by
  $M_{A,B}(X) = (B\x X)^A$ on objects and $M_{A,B}(f): \phi\mapsto (\id_B\x f)
  \circ \phi$ on morphisms.
\end{definition}

$M_{A, B}$-coalgebras are Mealy machines with input alphabet $A$ and output
alphabet $B$, and possibly an infinite state space. The set of causal
functions $A^\w\to B^\w$ carries a final $M_{A, B}$-coalgebra using the
following operations, originally observed by Rutten in \cite{ruttenMealy}.

\begin{definition}
  The \emph{Mealy output} of a causal function $f: A^\w\to B^\w$ is the function
  $\hd f: A \to B$ defined by $(\hd f)(a) = f(a:\sigma)_0$ for any $\sigma \in
  A^\w$.
\end{definition}

\begin{definition}
  Given $a \in A$ and a causal function $f: A^\w\to B^\w$, the \emph{Mealy
  ($a$-)derivative} of $f$ is the causal function $\partial_a f: A^\w\to B^\w$
  defined by $(\partial_a f)(\sigma) = \tl(f(a:\sigma))$.
\end{definition}

Note $\hd(f)$ is well-defined even though $\sigma$ may be freely chosen due to
the causality of $f$.

\begin{proposition}[Proposition 2.2, \cite{ruttenMealy}]\label{prop:coalgebraicDef}
  The set of causal functions $A^\w\to B^\w$ carries an $M_{A,B}$-coalgebra
  via $f \mapsto \lambda a.((\hd f)(a), \partial_a f)$, which is a final
  $M_{A, B}$-coalgebra.
\end{proposition}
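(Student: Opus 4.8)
The plan is to exhibit the claimed coalgebra structure explicitly and then establish finality by constructing, for an arbitrary $M_{A,B}$-coalgebra, the unique coalgebra homomorphism into the set of causal functions. First I would check that the map $\beta: f \mapsto \lambda a.((\hd f)(a),\partial_a f)$ is well-defined, i.e.\ that $\partial_a f$ is again causal. This is a routine unwinding of the definitions: if $\sigma_{0:k}=\tau_{0:k}$ then $(a:\sigma)_{0:k+1}=(a:\tau)_{0:k+1}$, so causality of $f$ gives $f(a:\sigma)_{0:k+1}=f(a:\tau)_{0:k+1}$, and applying $\tl$ yields $(\partial_a f)(\sigma)_{0:k}=(\partial_a f)(\tau)_{0:k}$. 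Thus $\beta$ genuinely lands in $M_{A,B}(\caus{A}{B})$, writing $\caus{A}{B}$ for the set of causal functions $A^\w\to B^\w$.

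Next, given any $M_{A,B}$-coalgebra $c: X\to (B\x X)^A$, with components $c(x)(a)=(o(x,a),n(x,a))$ (output and next state), I would define the candidate homomorphism $h: X\to\caus{A}{B}$ by specifying $h(x)(\sigma)$ pointwise via the evident recursion on Mealy machine runs: $h(x)(\sigma)_0 = o(x,\sigma_0)$, and $h(x)(\sigma)_{k+1} = h(n(x,\sigma_0),\tl\,\sigma)_k$. Equivalently, $h(x)(\sigma)_k = o(x_k,\sigma_k)$ where $x_0=x$ and $x_{j+1}=n(x_j,\sigma_j)$. Causality of $h(x)$ is immediate from this description, since $h(x)(\sigma)_{0:k}$ depends only on $x$ and $\sigma_{0:k}$. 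I would then verify that $h$ is a coalgebra homomorphism, i.e.\ $\beta\circ h = M_{A,B}(h)\circ c$, which amounts to checking $(\hd\,h(x))(a) = o(x,a)$ and $\partial_a(h(x)) = h(n(x,a))$; both follow directly from the defining recursion of $h$, using that $\hd$ reads off index $0$ and $\partial_a$ corresponds to feeding $a$ and then taking $\tl$.

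Finally, for uniqueness, suppose $g: X\to\caus{A}{B}$ is another homomorphism. The homomorphism equations say exactly $g(x)(\sigma)_0 = o(x,\sigma_0)$ and $g(x)(\tl\,\sigma)$-style shifts reproduce the recursion for $h$; so I would prove $g(x)(\sigma)_k = h(x)(\sigma)_k$ for all $x,\sigma,k$ by induction on $k$. The base case is the Mealy-output equation, and the inductive step uses the Mealy-derivative equation $\partial_a g(x) = g(n(x,a))$ together with the induction hypothesis at $n(x,\sigma_0)$. Since $k$ and $\sigma$ were arbitrary, $g=h$. The only mildly delicate point is bookkeeping the interaction between slicing/prepending and the head--tail operations (e.g.\ $\tl(a:\sigma)=\sigma$, $(a:\sigma)_{0:k+1}$ vs.\ $\sigma_{0:k}$); none of it is hard, but it is where an informal argument could slip. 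I expect the induction for uniqueness, and the correct statement of the recursion defining $h$, to be the main (very mild) obstacle; everything else is diagram-chasing by definitions.
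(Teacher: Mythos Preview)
Your argument is correct and is the standard one: you verify that the Mealy output and Mealy derivative land in the right place, define the unique morphism from an arbitrary $M_{A,B}$-coalgebra by running the machine, and check existence and uniqueness by an induction on the time index. There is nothing missing or shaky in what you wrote; the bookkeeping you flag (interaction of prepend/$\hd$/$\tl$ with slicing) is indeed the only place one could fumble, and you handle it correctly.

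As for comparison with the paper: the paper does not prove this proposition at all. It is stated with attribution (Proposition~2.2 of Rutten's Mealy-machine paper) and used as background, with no argument supplied. So there is no ``paper's own proof'' to compare against; your write-up simply fills in what the authors chose to cite rather than reprove. If anything, your proof is exactly the one Rutten gives in the cited source, so you have reconstructed the intended argument.
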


Hence, a coalgebraic methodology for defining causal functions is to define a
Mealy machine and take the image of a particular state in the final coalgebra.
By constructing the Mealy machine cleverly, one can ensure the resulting
causal function has some desired properties. This is the core idea behind the
``syntactic method'' using GSOS definitions in \cite{sdes}. In that work, a
Mealy machine of terms is built in such a way that all causal functions
$(A^k)^\w \to A^\w$ can be recovered.

\begin{example}\label{ex:vectorSpace}
  Suppose $(A, +_A, \cdot_A, 0_A)$ is a vector space over $\R$. This vector
  space structure can be extended to $A^\w$ componentwise in the obvious way.
  To illustrate the coalgebraic method, we characterise this structure with
  coalgebraic definitions.

  To define sequence vector sum coalgebraically, we define a Mealy machine $1
  \to (A\x 1)^{A\x A}$ with one state, satisfying $\hd(s)(a, a') = a +_A a'$
  and $\partial_{(a, a')}(s) = s$. Then ${+_{A^\w}}: (A\x A)^\w \to A^\w$ is
  defined to be the image of $s$ in the final $M_{A^2, A}$-coalgebra.

  Note that technically the vector sum in $A^\w$ should be a function of type
  $A^\w\x A^\w \to A^\w$, so we are tacitly using the isomorphism between
  $(A\x A)^\w$ and $A^\w\x A^\w$. We will be using similar recastings of
  sequences in the sequel without bringing up this point again.

  The zero vector can similarly be defined by a single state Mealy machine $1
  \to (A\x 1)^1$ with input alphabet 1 and output alphabet $A$, satisfying
  $\hd(s')(*) = 0_A$ and $\partial_*(s') = s'$. The zero vector of $A^\w$ is
  the global element picked out by the image of $s'$.

  Finally, scalar multiplication can be defined with a Mealy machine
  $\R\to(A\x\R)^A$ with states $r \in \R$, such that $\hd(r)(a) =
  r\cdot_A a$ and $\partial_a r = r$. Then $r\cdot_{A^\w} \sigma \teq
  \sem{r}(\sigma)$, where $\sem{r}$ is the image of $r$ in the final
  $M_{A, A}$-coalgebra.

  We immediately begin dropping the subscripts from $+_{A^\w}$ and
  $\cdot_{A^\w}$ and when the relevant vector space can be inferred from
  context.
\end{example}

\subsection{Causal functions via finite approximation}

Another approach to causal functions is consider them as a limit of finite
approximations, replacing the single function on infinite data with infinitely
many functions on finite data. There are (at least) two approaches with this
general style, which we briefly describe next.

\begin{definition}\label{def:finiteApproxs}
  Let $f: A^\w\to B^\w$ be a causal function and $\sigma \in A^\w$.

  The \emph{pointwise approximation} of $f$ is the sequence of
  functions $U_k(f): A^{k+1} \to B$ defined by $U_k(f)(w) \teq f(w:\sigma)_k$.

  The \emph{stringwise approximation} of $f$ is the sequence of
  functions $T_k(f): A^{k+1} \to B^{k+1}$ defined by $T_k(f)(w) \teq
  f(w:\sigma)_{0:k}$.
\end{definition}

Again, these are well-defined despite $\sigma$ being arbitrary due to $f$'s
causality. We chose the letters $U$ and $T$ deliberately---sometimes the
pointwise approximants of a causal function are called its \emph{U}nrollings,
and the stringwise approximants are called its \emph{T}runcations.

Conversely, given an arbitrary collection of functions $u_k: A^{k+1} \to B$
for $k \in \w$, there is a unique causal function whose pointwise
approximation is the sequence $u_k$. Thus we have the following
bijective correspondence:
\begin{equation}
\label{eqn:caualapprox}
\begin{prooftree}
A^\w\longrightarrow B^\w \mbox{ causal}
\Justifies
A^{k+1} \longrightarrow B \mbox{ for each }k\in\w
\end{prooftree}
\end{equation}

We can nearly do the same for stringwise approximations, but the sequence
$t_k: A^{k+1} \to B^{k+1}$ must satisfy $t_k(w) = t_{k+1}(wa)_{0:k}$ for all
$w \in A^{k+1}$ and $a \in A$.

The interchangeability between a causal function and its approximants
is a crucial theme in this work. Since a function's pointwise and
stringwise approximants are inter-obtainable, we will sometimes
refer to a causal function's ``finite approximants'' by which we mean
either family of approximants.

\subsection{Causal functions via recurrence}

Finite approximants are a very flexible way of defining causal functions, but
causal functions may have a more compact representation when they conform to a
regular pattern. Recurrence is one such pattern where a causal function is
defined by repeatedly using an ordinary function $g: A \x B \to B$ and an
initial value $i \in B$ to obtain $\rec_i(g): A^\w \to B^\w$ via:%
$$[\rec_i(g)(\sigma)]_k = \begin{cases} g(\sigma_0, i) & \text{ if } k = 0 \\ 
g(\sigma_k, [\rec_i(g)(\sigma)]_{k-1}) & \text{ if } k > 0\end{cases}$$

Recurrent definitions can be converted into finite approximant definitions
using the following: $U_k(\rec_i(g))(\sigma_{0:k}) = g(\sigma_k,
g(\sigma_{k-1}, \ldots g(\sigma_1, g(\sigma_0, i)) \ldots ))$. Note these
pointwise approximants satisfy the recurrence relation
$U_k(\rec_i(g))(\sigma_{0:k}) = g(\sigma_k,
U_{k-1}(\rec_i(g))(\sigma_{0:k-1}))$.

\begin{example}\label{ex:runningProduct}
  The unary running product function $\prod: \R^\w \to \R^\w$ can be defined by
  a recurrence relation:
  $$\prod(\sigma) = \tau \Leftrightarrow \begin{cases}
  \tau_{k+1} = \sigma_{k+1}\cdot\tau_k &\text{ after } \tau_0 = \sigma_0\cdot 1 \end{cases}$$
  Here $g$ is multiplication of reals and $i = 1$. In approximant form,
  $[\prod(\sigma)]_k = \prod_{i = 0}^k \sigma_i$.
\end{example}

A special case of recurrent causal functions occurs when there is an $h: A \to
B$ such that $g(a, b) = h(a)$ for all $(a, b) \in A\x B$. In this case,
$[\rec_i(g)(\sigma)]_k = h(\sigma_k)$ and in particular does not depend on the
initial value $i$ or any entry $\sigma_j$ for $j < k$. We denote $\rec_i(g)$
by $\map(h)$ in this special case since it maps $h$ componentwise across the
input sequence.

\section{Differentiating causal functions}\label{sec:causalDiff}

Our goal in this work is to develop a basic differential calculus for causal
functions. Thus we will focus our attention on causal functions between
real-vector sequences $(\R^n)^\w$ for $n \in \w$, specializing from causal
functions on general sets from the last section. We will draw many of our
illustrating examples for derivatives from Rutten's \emph{stream
calculus}~\cite{Rutten_2005}, which describes many such causal functions
between real-number streams. More importantly, \cite{Rutten_2005} establishes
many useful algebraic properties of these functions rigorously via coalgebraic
methods.

There are many different approaches one might consider to defining
differentiable causal functions. One might be to take the original coalgebraic
definition and replace the underlying category ($\Set$) with a category of
finite-dimensional Cartesian spaces and differentiable (or smooth) maps.
Unfortunately, the space of differentiable functions between
finite-dimensional spaces is not finite-dimensional, so the exponential needed
to define the $M_{A, B}$ functor in this category does not exist.

Another approach is to think of causal functions as functions between infinite
dimensional vector spaces and take standard notions from analysis, like
Fréchet derivatives, and apply them in this context. However, norms on
sequence spaces usually impose a finiteness condition like bounded or
square-summable on the domains and ranges of sequence functions. These
restrictions are compatible with many causal functions like the pointwise sum
function above, but other causal functions like the running product function
become significantly less interesting.

Our approach to differentiating causal functions is to consider a causal
function differentiable when all of its finite approximants are differentiable
via the correspondence~\eqref{eqn:caualapprox}. We will develop this idea
rigorously in section~\ref{ssec:derivativeDefinition}, but first we need to
know a bit about linear causal functions.

\subsection{Linear causal functions}\label{ssec:linearCausal}

Stated abstractly, the derivative of a function at a point is a linear map
which provides an approximate change in the output of a function given an
input representing a small change in the input to that function
\cite{spivak1965calculus}. Since linear functions $\R \to \R$ are in bijective
correspondence with their slopes, typically in single-variable calculus the
derivative of a function at a point is instead given as a single real number.
In multivariable calculus, derivatives are usually represented by (Jacobian)
matrices since matrices represent linear maps between finite dimensional
spaces. Linear functions between infinite dimensional vector spaces do not
have a similarly compact, computationally-useful representation, but we can
still define derivatives of (causal) functions at points to be linear (causal)
maps.

We described the natural vector space structure of $(\R^n)^\w$ in
Example~\ref{ex:vectorSpace}. A linear causal function is a causal function
which is also linear with respect to this vector space structure.

\begin{definition}
  A causal function $f: (\R^n)^\w \to (\R^m)^\w$ is \emph{linear} when
  $f(r\cdot \sigma) = r\cdot f(\sigma)$ and $f(\sigma + \tau) = f(\sigma) +
  f(\tau)$ for all $r \in \R$ and $\sigma, \tau \in (\R^n)^\w$.
\end{definition}

\begin{lemma}\label{lem:linearApproximants}
  Let $f: (\R^n)^\w \to (\R^m)^\w$ be a causal function. The following are equivalent:
  \begin{enumerate}
    \item $f$ is linear,
    \item $U_k(f): (\R^n)^{k+1}\to \R^m$ is linear for all $k \in \w$, and
    \item $T_k(f): (\R^n)^{k+1} \to (\R^m)^{k+1}$ is linear for all $k \in \w$.
  \end{enumerate}
\end{lemma}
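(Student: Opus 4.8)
The plan is to prove the cycle of implications $(1)\Rightarrow(2)\Rightarrow(3)\Rightarrow(1)$, exploiting the correspondences between a causal function and its two families of finite approximants established in Definition~\ref{def:finiteApproxs} and equation~\eqref{eqn:caualapprox}. The key observation throughout is that the vector space structure on $(\R^n)^\w$ is componentwise (Example~\ref{ex:vectorSpace}), so both $U_k$ and $T_k$ are, roughly speaking, evaluated ``coordinatewise'' in a way that commutes with taking restrictions $\sigma_{0:k}$ of $r\cdot\sigma$ and $\sigma+\tau$.

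\textbf{$(1)\Rightarrow(2)$ and $(1)\Rightarrow(3)$.} Assume $f$ is linear. Fix $k$ and pick any $\sigma\in(\R^n)^\w$ to anchor the approximants. For words $w,w'\in(\R^n)^{k+1}$ and $r\in\R$, I would compute $U_k(f)(r\cdot w) = f((r\cdot w):\sigma)_k$ and compare it with $r\cdot U_k(f)(w) = r\cdot f(w:\sigma)_k$. The point is that prepending distributes over the (componentwise) vector operations up to choice of tail: $(r\cdot w):(r\cdot\sigma) = r\cdot(w:\sigma)$, and by causality $f((r\cdot w):\sigma)_k = f((r\cdot w):(r\cdot\sigma))_k$ since the two agree on coordinates $0{:}k$. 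Hence $U_k(f)(r\cdot w) = f(r\cdot(w:\sigma))_k = (r\cdot f(w:\sigma))_k = r\cdot U_k(f)(w)$, using componentwise scalar multiplication and linearity of $f$ in the middle step. The additivity check $U_k(f)(w+w') = U_k(f)(w)+U_k(f)(w')$ is identical, using $(w+w'):(\sigma+\sigma) = (w:\sigma)+(w':\sigma)$ and causality to replace the tail $\sigma+\sigma$ by $\sigma$. The argument for $T_k(f)$ is word-for-word the same, reading off coordinates $0{:}k$ instead of coordinate $k$, since slicing a componentwise-defined sequence is again componentwise.

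\textbf{$(2)\Rightarrow(1)$ and $(3)\Rightarrow(1)$.} Conversely, suppose every $U_k(f)$ is linear. To show $f(r\cdot\sigma) = r\cdot f(\sigma)$, it suffices to check equality at every coordinate $k$, i.e.\ $f(r\cdot\sigma)_k = (r\cdot f(\sigma))_k = r\cdot f(\sigma)_k$. Now $f(r\cdot\sigma)_k$ depends only on $(r\cdot\sigma)_{0:k} = r\cdot(\sigma_{0:k})$ by causality, so $f(r\cdot\sigma)_k = U_k(f)(r\cdot\sigma_{0:k}) = r\cdot U_k(f)(\sigma_{0:k}) = r\cdot f(\sigma)_k$ by linearity of $U_k(f)$. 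Additivity is analogous, noting $(\sigma+\tau)_{0:k} = \sigma_{0:k}+\tau_{0:k}$. For $(3)\Rightarrow(1)$ I would either run the same argument with $T_k$ in place of $U_k$ (the $k$th coordinate of $T_k(f)(\sigma_{0:k})$ is $f(\sigma)_k$), or simply note that $U_k(f)$ is recoverable as the last coordinate of $T_k(f)$, so linearity of all $T_k$ forces linearity of all $U_k$, reducing to the case already done.

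I do not anticipate a genuine obstacle here; the proof is essentially bookkeeping. The one place to be careful is the repeated appeal to causality to justify changing the ``anchor'' tail $\sigma$ after applying a vector operation — one must confirm that $(r\cdot w):\sigma$ and $(r\cdot w):(r\cdot\sigma)$ (resp.\ the additive analogues) really do agree on the prefix $0{:}k$ so that $f$ cannot distinguish them, and that the well-definedness clause following Definition~\ref{def:finiteApproxs} is being used consistently. Beyond that, the only thing worth stating explicitly is that the componentwise vector space structure on $(\R^n)^\w$ interacts correctly with the prepend operation $(-):(-)$ and with slicing; once that is recorded, all six implications collapse to the same one-line computation.
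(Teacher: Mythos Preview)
Your argument is correct; the paper states this lemma without proof, so there is nothing to compare against, but your cycle of implications via causality and the componentwise vector space structure is exactly the routine verification the authors evidently had in mind. The only cosmetic remark is that you prove four implications rather than a three-cycle, but since $(2)\Leftrightarrow(3)$ is immediate (each $U_k(f)$ is a coordinate projection of $T_k(f)$, and $T_k(f)$ is the tuple of $U_0(f),\ldots,U_k(f)$ composed with the obvious projections), this redundancy costs nothing.
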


This refines the correspondence~\eqref{eqn:caualapprox}, allowing us to define
a linear causal function by naming linear finite approximants.

Since linear functions between finite dimensional vector spaces can
be represented by matrices, we can think of linear causal functions as
limits of the matrices representing its finite approximants. This
view results in row-finite infinite matrices, such as:
$$\begin{bmatrix}
  A_{00} & 0      & 0      & \ldots \\
  A_{10} & A_{11} & 0      & \ldots \\
  A_{20} & A_{21} & A_{22} & \ldots \\
  \vdots & \vdots & \vdots & \ddots
\end{bmatrix}$$
\noindent where the $A_{ij}$ are $m$-row, $n$-column blocks such that for
$j>i$ all entries are 0. These are related to the matrices for the
approximants of the causal function as follows.

\begin{enumerate}
  \item The matrix $\begin{bmatrix}A_{k0} & A_{k1} & \ldots & A_{kk}\end{bmatrix}$
  is the matrix representing $U_k(f)$.
  \item The matrix $\begin{bmatrix}
    A_{00} & 0      & 0      & \ldots & 0      \\
    A_{10} & A_{11} & 0      & \ldots & 0      \\
    \vdots & \vdots & \vdots & \ddots & \vdots \\
    A_{k0} & A_{k1} & A_{k2} & \ldots & A_{kk}
  \end{bmatrix}$ is the matrix representing $T_k(f)$. The compatibility
  conditions on the functions $T_k(f)$ ensure that the matrix for $T_k(f)$ can
  be found in the upper left corner of the matrix for $T_{k+1}(f)$. Note also
  the upper triangular nature of the matrices for $T_k(f)$ are a consequence
  of causality---the first $m$ outputs can depend only on the first $n$ inputs,
  so the last entries in the top row must all be 0 and so on.
\end{enumerate}

Unlike finite-dimensional matrices, we do not think these infinite matrices
are a computationally useful representation, but they are conceptually useful
to get an idea of how causal linear functions can be considered the limit of
their linear truncations.

\subsection{Definition of derivative}\label{ssec:derivativeDefinition}

As we have mentioned, we will use the derivatives of the approximants of a
causal function to define the derivative of the causal function itself. We
denote the $m$-row, $n$-column Jacobian matrix of a differentiable function
$\varphi: \R^n \to \R^m$ at $x \in \R^n$ by $J\varphi(x)$. Recall this matrix
is%
$$\begin{bmatrix}
\pdt{\varphi_1}{x_1}(x) & \pdt{\varphi_1}{x_2}(x) & \ldots & \pdt{\varphi_1}{x_n}(x) \\
\pdt{\varphi_2}{x_1}(x) & \pdt{\varphi_2}{x_2}(x) & \ldots & \pdt{\varphi_2}{x_n}(x) \\
\vdots & \vdots & \ddots & \vdots \\
\pdt{\varphi_m}{x_1}(x) & \pdt{\varphi_m}{x_2}(x) & \ldots & \pdt{\varphi_m}{x_n}(x)
\end{bmatrix}$$
\noindent where $\varphi_i: \R^n \to \R$ and $\varphi = \tuple{\varphi_1, \ldots \varphi_m}$. We will
also be glossing over the distinction between a matrix and the linear function
it represents, using $J\varphi(x)$ to mean either when convenient.

\begin{definition}\label{def:causalDiff}
A causal function $f: (\R^n)^\w\to (\R^m)^\w$ is \emph{differentiable at
$\sigma \in (\R^n)^\w$} if all of its finite approximants $U_k(f):
(\R^n)^{k+1} \to \R^m$ are differentiable at $\sigma_{0:k}$ for all $k \in
\w$. If $f$ is differentiable at $\sigma$, the \emph{derivative of $f$ at
$\sigma$} is the unique linear causal function $\seqD f(\sigma): (\R^n)^\w\to
(\R^m)^\w$ satisfying $U_k(\seqD f(\sigma))= J(U_k(f))(\sigma_{0:k})$.
\end{definition}

In this definition we are using the correspondence~\eqref{eqn:caualapprox},
refined in Lemma~\ref{lem:linearApproximants}, which allows us to define a
causal (linear) function by specifying its (linear) finite approximants. We
could equally well have used stringwise approximants in this definition rather
than pointwise approximants, as the following lemma states.

\begin{lemma}
  The causal function $f$ is differentiable at $\sigma$ if and only if each of
  $T_k(f)$ are differentiable at $\sigma_{0:k}$ for all $k \in \w$. In this case,
  $\seqD f(\sigma)$ satisfies $T_k(\seqD f(\sigma)) = J(T_k(f))(\sigma_{0:k})$.
\end{lemma}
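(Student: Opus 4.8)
The plan is to reduce the statement to the already-established pointwise version in Definition~\ref{def:causalDiff} by exploiting the tight relationship between the stringwise approximants $T_k(f)$ and the pointwise approximants $U_j(f)$. Recall from Definition~\ref{def:finiteApproxs} that $T_k(f)(w) = f(w:\sigma)_{0:k} = (U_0(f)(w_{0:0}), U_1(f)(w_{0:1}), \ldots, U_k(f)(w_{0:k}))$ for $w \in A^{k+1}$; that is, $T_k(f)$ is literally the tupling $\tuple{U_0(f)\circ \pi_{0:0}, \ldots, U_k(f)\circ \pi_{0:k}}$ of the first $k+1$ pointwise approximants, each precomposed with a coordinate projection $(\R^n)^{k+1} \to (\R^n)^{j+1}$. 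This is the structural fact that drives everything.

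First I would establish the equivalence of differentiability. A tuple-valued function $\tuple{g_0,\ldots,g_k}$ is differentiable at a point iff each component $g_j$ is; and a composite $g_j \circ \pi_{0:j}$ with a linear (hence smooth) projection is differentiable at $\sigma_{0:k}$ iff $g_j = U_j(f)$ is differentiable at $\pi_{0:j}(\sigma_{0:k}) = \sigma_{0:j}$. So $T_k(f)$ is differentiable at $\sigma_{0:k}$ iff $U_0(f),\ldots,U_k(f)$ are all differentiable at $\sigma_{0:0},\ldots,\sigma_{0:k}$ respectively. Letting $k$ range over $\w$, the family $\{T_k(f)\}_k$ is differentiable at the appropriate points iff the family $\{U_k(f)\}_k$ is, which by Definition~\ref{def:causalDiff} is exactly differentiability of $f$ at $\sigma$. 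This gives the ``if and only if'' part of the lemma.

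Next I would identify the derivative. Assuming $f$ is differentiable at $\sigma$, the chain rule for finite-dimensional maps gives $J(U_j(f)\circ\pi_{0:j})(\sigma_{0:k}) = J(U_j(f))(\sigma_{0:j})\circ\pi_{0:j}$, and the Jacobian of a tupling is the stacking of the component Jacobians, so
\[
J(T_k(f))(\sigma_{0:k}) = \bigl\langle J(U_0(f))(\sigma_{0:0})\circ\pi_{0:0},\ \ldots,\ J(U_k(f))(\sigma_{0:k})\circ\pi_{0:k}\bigr\rangle.
\]
On the other hand, by Definition~\ref{def:causalDiff} the linear causal function $\seqD f(\sigma)$ satisfies $U_j(\seqD f(\sigma)) = J(U_j(f))(\sigma_{0:j})$, and applying the identity from Definition~\ref{def:finiteApproxs} that expresses $T_k$ of any causal function in terms of its $U_j$'s to the function $\seqD f(\sigma)$ yields $T_k(\seqD f(\sigma)) = \tuple{U_0(\seqD f(\sigma))\circ\pi_{0:0},\ldots,U_k(\seqD f(\sigma))\circ\pi_{0:k}}$. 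Comparing the two displays term by term shows $T_k(\seqD f(\sigma)) = J(T_k(f))(\sigma_{0:k})$ for every $k$, as claimed. (One should also note that $\seqD f(\sigma)$ is the \emph{unique} linear causal function with this property, which is immediate from the uniqueness half of correspondence~\eqref{eqn:caualapprox}, refined by Lemma~\ref{lem:linearApproximants}, since the $T_k$'s of a linear causal function determine it.)

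I do not expect a serious obstacle here; the only thing requiring care is being precise about the projections $\pi_{0:j}\colon (\R^n)^{k+1}\to(\R^n)^{j+1}$ and checking that composing Jacobians with them matches the blockwise ``upper-left corner'' compatibility already described for the truncation matrices in Section~\ref{ssec:linearCausal}. In other words, the mild bookkeeping point is to make sure the identity $T_k(g)(w) = (U_0(g)(w_{0:0}),\ldots,U_k(g)(w_{0:k}))$ is stated cleanly for an arbitrary causal $g$ and then instantiated at both $g = f$ and $g = \seqD f(\sigma)$; once that is in hand the rest is the chain rule and the elementary fact that differentiability and Jacobians of tuple-valued maps are computed componentwise.
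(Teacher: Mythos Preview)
Your argument is correct and is the natural way to establish the lemma. The paper itself does not supply a proof of this lemma: it is stated immediately after Definition~\ref{def:causalDiff} as a routine observation, with only the remark that one ``could equally well have used stringwise approximants in this definition rather than pointwise approximants.'' Your approach---writing $T_k(f)$ as the tupling $\tuple{U_0(f)\circ\pi_{0:0},\ldots,U_k(f)\circ\pi_{0:k}}$, using that differentiability and Jacobians of tuple-valued maps are computed componentwise, and that composition with the linear projections $\pi_{0:j}$ reflects and preserves differentiability---is exactly the content the paper is taking for granted, and it matches the block-matrix picture in Section~\ref{ssec:linearCausal} where the Jacobian of $T_k(f)$ is the lower-triangular matrix whose $j$th block row is $J(U_j(f))(\sigma_{0:j})$ padded with zeros. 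The only minor point worth making explicit (you gesture at it) is the backward direction of the differentiability biconditional: to recover differentiability of $U_j(f)$ at $\sigma_{0:j}$ from that of $T_k(f)$ at $\sigma_{0:k}$, compose with the affine section $v\mapsto(v,\sigma_{j+1},\ldots,\sigma_k)$ and project onto the $j$th output block.
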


Though we have mentioned this is not particularly useful computationally, the
derivative of a differentiable function at a point has a representation as a
row-finite infinite matrix.

\begin{lemma}
If $f$ is differentiable at $\sigma$, each $U_k(f): (\R^n)^{k+1} \to \R^m$ has
an $m$-row, $n(k+1)$-column Jacobian matrix representing its derivative at
$\sigma_{0:k}$. Let $A_{ki}$ be $m$-row, $n$-column blocks of this Jacobian, so
that $J(U_k(f))(\sigma_{0:k}) = \begin{bmatrix}A_{k0} & A_{k1} & \ldots &
A_{kk}\end{bmatrix}$ The derivative of $f$ at $\sigma$ is the linear
causal function represented by the row-finite infinite matrix
$$\seqD f(\sigma) = \begin{bmatrix}
  A_{00} & 0      & 0      & \ldots \\
  A_{10} & A_{11} & 0      & \ldots \\
  A_{20} & A_{21} & A_{22} & \ldots \\
  \vdots & \vdots & \vdots & \ddots
\end{bmatrix}$$
\end{lemma}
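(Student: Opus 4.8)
The plan is to assemble the infinite matrix for $\seqD f(\sigma)$ block-row by block-row, reading off each block row from the Jacobian of the corresponding pointwise approximant, and then to verify that the resulting infinite matrix genuinely represents the linear causal function $\seqD f(\sigma)$ as defined in Definition~\ref{def:causalDiff}. The key observation is that the statement is really just the conjunction of two facts already essentially established in the excerpt: first, that a linear causal function is represented by a row-finite lower-block-triangular infinite matrix whose $k$th block row $\begin{bmatrix}A_{k0} & A_{k1} & \ldots & A_{kk}\end{bmatrix}$ is exactly the matrix of its $k$th pointwise approximant $U_k$ (this is the discussion following Lemma~\ref{lem:linearApproximants}); and second, that $U_k(\seqD f(\sigma)) = J(U_k(f))(\sigma_{0:k})$ by the defining property of $\seqD f(\sigma)$.

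Concretely, first I would invoke Definition~\ref{def:causalDiff}: since $f$ is differentiable at $\sigma$, each $U_k(f)\colon (\R^n)^{k+1}\to\R^m$ is differentiable at $\sigma_{0:k}$, so its Jacobian $J(U_k(f))(\sigma_{0:k})$ exists and is an $m$-row, $n(k+1)$-column matrix; chopping it into $n$-column blocks gives the blocks $A_{k0},\dots,A_{kk}$ named in the statement. Next I would recall that $\seqD f(\sigma)$ is by definition the unique linear causal function with $U_k(\seqD f(\sigma)) = J(U_k(f))(\sigma_{0:k})$ for all $k$; existence and uniqueness come from the correspondence~\eqref{eqn:caualapprox} as refined by Lemma~\ref{lem:linearApproximants}. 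Then I would apply the matrix representation of linear causal functions: a linear causal function is faithfully represented by the row-finite infinite matrix whose $k$th block row is the matrix of its $k$th pointwise approximant, padded with zero blocks in columns $>k$ (the zeros being forced by causality, exactly as noted in item~2 of the list after Lemma~\ref{lem:linearApproximants}). Substituting $U_k(\seqD f(\sigma)) = \begin{bmatrix}A_{k0} & A_{k1} & \ldots & A_{kk}\end{bmatrix}$ into this representation yields precisely the displayed matrix.

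The only point needing slightly more care is consistency of the blocks across rows: the block $A_{ki}$ should be well-defined independently of which row it is read from, i.e., the $(k,i)$ block obtained from $J(U_k(f))(\sigma_{0:k})$ must agree with the $(k,i)$ block one would get by truncating $J(T_k(f))(\sigma_{0:k})$, and more importantly the infinite matrix so assembled must actually be row-finite and lower-block-triangular. Row-finiteness is immediate since row $k$ has only $k+1$ nonzero blocks. Lower-block-triangularity ($A_{ji}=0$ for $i>j$) is not an extra hypothesis to be checked but is automatic: $A_{ji}$ for $i>j$ simply does not appear among the blocks of $J(U_j(f))(\sigma_{0:j})$, which has only $j+1$ block columns, so the corresponding entry of the infinite matrix is defined to be $0$. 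This is the part I would expect to write out most explicitly, since it is where the causality of $f$ (inherited by $\seqD f(\sigma)$, which is linear causal by construction) does the real work.

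The main obstacle, such as it is, is purely bookkeeping: making sure the indexing of blocks is uniform and that "the matrix represented by" a linear causal function has been pinned down precisely enough that the final substitution is unambiguous. There is no analytic content beyond the existence of the finite-dimensional Jacobians, which is handed to us by differentiability at $\sigma$; everything else is a translation between the correspondence~\eqref{eqn:caualapprox} and the block-matrix picture, both of which are already in place.
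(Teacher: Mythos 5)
Your proposal is correct and matches the paper's (implicit) reasoning: the paper states this lemma without proof precisely because it follows immediately from Definition~\ref{def:causalDiff} (which gives $U_k(\seqD f(\sigma)) = J(U_k(f))(\sigma_{0:k})$) together with the block-matrix representation of linear causal functions discussed after Lemma~\ref{lem:linearApproximants}, which is exactly the argument you assemble. Your extra care about zero-padding and row-finiteness is sound bookkeeping, not a deviation.
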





Note that this linear causal function can be evaluated at a sequence $\Delta
\sigma \in (\R^n)^\w$ by multiplying the infinite matrix by $\Delta \sigma$,
considered as an infinite column vector.


\subsection{Examples}

Next, we use this definition of derivative to find the causal derivatives of
some basic functions from Rutten's stream calculus.

\begin{example}\label{ex:streamSumDiff}
We show the pointwise sum stream function $+: (\R^2)^\w \to \R^\w$ is its own
derivative at every point $(\sigma, \tau) \in (\R^2)^\w$. Note
${U_k(+)(\sigma_0, \tau_0, \ldots, \sigma_k, \tau_k) = \sigma_k + \tau_k}$, so
${J(U_k(+))(\sigma_0, \tau_0, \ldots, \sigma_k, \tau_k) =
\begin{bmatrix} 0 & \ldots & 0 & 1 & 1\end{bmatrix}}$. This is the matrix
representation of $U_k(+)$ itself, so $(\seqD +)(\sigma, \tau) = +$ or, in other
notation, $(\seqD +)(\sigma, \tau)(\Delta\sigma, \Delta\tau) = \Delta\sigma + \Delta\tau$
for any $\sigma, \tau, \Delta\sigma, \Delta\tau \in \R^\w$.

This argument can be repeated for all pointwise sum functions
$+:(\R^n\x\R^n)^\w \to (\R^n)^\w$, replacing the ``1'' blocks in the Jacobian
above with $I_n$.
\end{example}

Since the derivative of any constant $x: 1 \to \R^n$ is $0_{\R^n}: 1 \to
\R^n$, the derivative of any constant sequence must necessarily be the zero
sequence. In stream calculus, there are two important constant sequences
defined corecursively: $[r]$ defined by $\hd([r])(*) = r$ and $\partial_*([r])
= [0]$ for all $r \in \R$ and $X$ defined by $\hd(X)(*) = 0$ and
$\partial_*(X) = [1]$. Written out as sequences, $[r] = (r, 0, 0, 0,
\ldots)$ and $X = (0, 1, 0, 0, \ldots)$.

\begin{example}\label{ex:constantElementDiff}
  $\seqD [r] = \seqD X = [0]$.
\end{example}

Next, we consider the Cauchy sequence product. Under the correspondence
between sequences $\sigma \in \R^\w$ and formal power series $\sum \sigma_ix^i
\in \R[[x]]$, the Cauchy product is the sequence operation corresponding to
the (Cauchy) product of formal power series. This operation is coalgebraically
characterized in Rutten~\cite{Rutten_2005} as the unique function $\x:
(\R^2)^\w \to \R^\w$ satisfying $\hd(\x)(s_0, t_0) = s_0\cdot t_0$ and
$(\partial_{(s_0, t_0)}\x)(\sigma, \tau) =
\tl(\sigma)\x\tau + [s_0]\times\tl(\tau)$. For our purposes, the explicit
definition is more useful: $U_k(\x)(\sigma_{0:k}, \tau_{0:k}) =
\sum_{i=0}^k\sigma_i\cdot\tau_{k-i}$.

\begin{example}\label{ex:cauchyProductDiff}
  We compute the derivative of the Cauchy product.
  $$J(U_k(\x))(\sigma_0, \tau_0, \ldots, \sigma_k, \tau_k) =
  \begin{bmatrix}\tau_{k} & \sigma_k & \tau_{k-1} & \sigma_{k-1} & \ldots & \tau_0 & \sigma_0\end{bmatrix}$$
  Notice that multiplying this matrix by (an initial segment) of a small change sequence
  $(\Delta\sigma_0, \Delta\tau_0, \ldots, \Delta\sigma_k, \Delta\tau_k)$ yields
  $$J(U_k(\x))(\sigma_0, \tau_0, \ldots, \sigma_k, \tau_k)
  (\Delta\sigma_0, \Delta\tau_0, \ldots, \Delta\sigma_k, \Delta\tau_k) 
  = \sum_{i=0}^k \Delta\sigma_i\cdot\tau_{k-i} + \sum_{i=0}^k \sigma_i\cdot\Delta\tau_{k-i}$$

  Therefore,
  $({\seqD \x}(\sigma, \tau))(\Delta\sigma, \Delta\tau) = \Delta\sigma\x\tau + \sigma\x\Delta\tau$.
\end{example}

Another sequence product considered in the stream calculus is the Hadamard
product, also called the pointwise product. Defined coalgebraically, the
Hadamard product is the unique binary operation defined by $\hd(\odot)(s_0,
t_0) = s_0\cdot t_0$ and $(\partial_{(s_0, t_0)}\odot)(\sigma, \tau) =
\tl(\sigma)\odot\tl(\tau)$. This has a similar derivative to the Cauchy
product: ${\seqD \odot}(\sigma, \tau)(\Delta\sigma, \Delta\tau) =
\Delta\sigma\odot\tau + \sigma\odot\Delta\tau$.

Note that these derivatives make sense without any reference to properties of
the sequences used. We are not aware of a way to realize this derivative as an
instance of a notion of derivative known in analysis. The most obvious notion
to try is a Fréchet derivative induced by a norm on the space of sequences.
However, all norms we know on these spaces, including $\ell^p$-norms and
$\gamma$-geometric norms $\|\sigma\| = \sum
\sigma_i\cdot\gamma^i$ for $\gamma \in (0, 1]$, restrict the space of
sequences to various extents.

\section{Rules of causal differentiation}\label{sec:rules}

Just as it is impractical to compute all derivatives from the definition in
undergraduate calculus, it is also impractical to compute causal derivatives
directly from the definition. To ease this burden, one typically proves
various ``rules'' of differentiation which provide compositional recipes for
finding derivatives. That is our task in this section.

There are at least two good reasons to hope \emph{a priori} that the standard
rules of differentiation might hold for causal derivatives. First, causal
derivatives were defined to agree with standard derivatives in their finite
approximants. Since these approximant derivatives satisfy these rules, we
might hope that they hold over the limiting process. Second, \textbf{smooth}
causal functions form a Cartesian differential category, as was shown in
\cite{sprungerLICS2019}. The theory of Cartesian differential categories
includes as axioms or theorems abstract versions of the chain rule, sum rule,
etc. However, neither of these reasons are immediately sufficient, so we
must provide independent justification.

\subsection{Basic rules and their consequences}

We begin by stating some rules familiar from undergraduate calculus.

\begin{proposition}[causal chain rule]\label{prop:causalChainRule}
  Suppose $f: (\R^n)^\w \to (\R^m)^\w$ and $g: (\R^m)^\w \to (\R^\ell)^\w$ are
  causal functions. Suppose further $f$ is differentiable at $\sigma\in(\R^n)^\w$
  and $g$ is differentiable at $f(\sigma)$. Then $h = g \circ f$ is differentiable at
  $\sigma$ and its derivative is $\seqD g(f(\sigma))\circ \seqD f(\sigma)$.
\end{proposition}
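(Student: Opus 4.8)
The plan is to reduce everything to the finite-dimensional chain rule via the pointwise approximants $U_k$. The key observation is that pointwise approximation interacts well with composition: if $h = g \circ f$, then I claim $U_k(h)$ factors through the stringwise approximant $T_k(f)$ followed by $U_k(g)$. Concretely, for $w \in (\R^n)^{k+1}$ and any fixed $\sigma$,
$$U_k(h)(w) = h(w:\sigma)_k = g(f(w:\sigma))_k = U_k(g)\big(f(w:\sigma)_{0:k}\big) = U_k(g)\big(T_k(f)(w)\big),$$
where the third equality uses causality of $g$ (its $k$th output depends only on the first $k{+}1$ inputs of $f(w:\sigma)$) and the definition of $T_k(g)$/$U_k(g)$. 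So $U_k(h) = U_k(g) \circ T_k(f)$ as functions $(\R^n)^{k+1} \to \R^m$.

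From here the argument is the finite-dimensional chain rule applied termwise. First I would verify differentiability of $h$ at $\sigma$: since $f$ is differentiable at $\sigma$, each $U_k(f)$ is differentiable at $\sigma_{0:k}$, hence (by the lemma relating $U_k$ and $T_k$) each $T_k(f)$ is differentiable at $\sigma_{0:k}$; since $g$ is differentiable at $f(\sigma)$, each $U_k(g)$ is differentiable at $f(\sigma)_{0:k} = T_k(f)(\sigma_{0:k})$; so the composite $U_k(h) = U_k(g) \circ T_k(f)$ is differentiable at $\sigma_{0:k}$ by the ordinary chain rule, and this holds for all $k$, giving differentiability of $h$ at $\sigma$ by Definition~\ref{def:causalDiff}. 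Then for the derivative, the ordinary chain rule gives
$$J(U_k(h))(\sigma_{0:k}) = J(U_k(g))\big(T_k(f)(\sigma_{0:k})\big) \circ J(T_k(f))(\sigma_{0:k}) = J(U_k(g))(f(\sigma)_{0:k}) \circ J(T_k(f))(\sigma_{0:k}).$$
By Definition~\ref{def:causalDiff} and the stringwise lemma, the right-hand side is $U_k(\seqD g(f(\sigma))) \circ T_k(\seqD f(\sigma))$. Using the composition identity for $U_k$ and $T_k$ established above (now applied to the linear causal functions $\seqD g(f(\sigma))$ and $\seqD f(\sigma)$), this equals $U_k\big(\seqD g(f(\sigma)) \circ \seqD f(\sigma)\big)$. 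Since this holds for every $k$, the uniqueness clause of Definition~\ref{def:causalDiff} forces $\seqD h(\sigma) = \seqD g(f(\sigma)) \circ \seqD f(\sigma)$.

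The main obstacle is getting the bookkeeping for the $U_k$–$T_k$ composition identity exactly right, and making sure it is legitimate to invoke it for linear causal functions — i.e., that $U_k(p \circ q) = U_k(p) \circ T_k(q)$ holds for arbitrary causal $p, q$ of matching type, which is precisely the displayed computation above with $h, g, f$ replaced by $p\circ q, p, q$. I would state this composition identity as a small preliminary fact (or fold it into the proof) so that it can be reused verbatim on both the functions and their derivatives. Everything else is a routine transfer of the classical chain rule across the correspondence~\eqref{eqn:caualapprox}; no analytic estimates are needed because differentiability here is defined purely in terms of differentiability of the finite approximants.
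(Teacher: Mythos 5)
Your proposal is correct and takes essentially the same route as the paper: reduce everything to the classical chain rule applied to the finite approximants and transfer the result back across the correspondence~\eqref{eqn:caualapprox}. The only cosmetic difference is that the paper works purely with stringwise approximants via $T_k(g\circ f)=T_k(g)\circ T_k(f)$, whereas you use the mixed identity $U_k(g\circ f)=U_k(g)\circ T_k(f)$; both are routine, and you are if anything more explicit about the differentiability claim and the final appeal to uniqueness, which the paper leaves implicit.
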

\begin{proof}
  Let $f_k = T_k(f)$, $g_k = T_k(g)$, and $h_k = T_k(h)$. We know $h_k = g_k
  \circ f_k$. We show the stringwise approximants of $\seqD (g\circ
  f)(\sigma)$ and $\seqD g(f(\sigma))\circ \seqD f(\sigma)$ match.%
  \begin{align*}
    T_k(\seqD (g\circ f)(\sigma)) &= J(h_k)(\sigma_{0:k})
    = J(g_k \circ f_k)(\sigma_{0:k})\\
    &= J(g_k)(f_k(\sigma_{0:k}))\x J(f_k)(\sigma_{0:k}) &(*)\\
    &= J(g_k)(f(\sigma)_{0:k})\x J(f_k)(\sigma_{0:k}) \\
    &= T_k(\seqD g(f(\sigma))) \circ T_k(\seqD f(\sigma))
    = T_k(\seqD g(f(\sigma)) \circ \seqD f(\sigma))
  \end{align*}
  where the starred line is by the classical chain rule.
\end{proof}

Since we have already overloaded $\x$ for both Cauchy stream product and
matrix product, we use $\|$ for the parallel composition of functions, where
the parallel composition of $\phi: \R^n \to \R^m$ and $\psi: \R^p\to \R^q$ is
$\phi\| \psi: \R^{n+p}\to\R^{m+q}$ defined by $(\phi\| \psi)(x, y) = (\phi(x),
\psi(y))$ for $x \in \R^p$ and $y \in \R^p$. We do not know of a standard name
for this rule, but in multivariable calculus there is a rule $J(\phi\|
\psi)(x, y) = J\phi(x)\| J\psi(y)$, which we shall call the parallel rule.
There is a similar rule for causal derivatives we describe next.

\begin{proposition}[causal parallel rule]\label{prop:causalParallelRule}
  Suppose $f: (\R^n)^\w \to (\R^m)^\w$ and $h: (\R^p)^\w \to (\R^q)^\w$ are
  causal functions, and that they are differentiable at $\sigma \in (\R^n)^\w$ and $\tau
  \in (\R^p)^\w$, respectively. Then $f\| h: (\R^{n+p})^\w \to (\R^{m+q})^\w$
  is differentiable at $(\sigma, \tau) \in (\R^{n+p})^\w$ and its derivative
  is $\seqD f(\sigma) \| \seqD h(\tau)$.
\end{proposition}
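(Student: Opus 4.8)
The plan is to follow the proof of the causal chain rule almost verbatim, working with stringwise approximants and reducing to two classical facts about maps between Euclidean spaces: that a parallel composition of differentiable maps is differentiable, and the classical parallel rule $J(\phi\|\psi)(x,y)=J\phi(x)\|J\psi(y)$ quoted above.

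First I would record the compatibility of $T_k$ with parallel composition. Writing $f_k=T_k(f)$ and $h_k=T_k(h)$, and identifying a sequence of $(\R^{n+p})$-vectors with a pair of sequences in the usual tacit way, one has $T_k(f\|h)=f_k\|h_k$, since $f\|h$ acts independently on the two coordinate blocks at each timestep, so $(f\|h)\big((w,v):(\sigma,\tau)\big)_{0:k}=\big(f(w:\sigma)_{0:k},\,h(v:\tau)_{0:k}\big)$. The same remark applied to the linear causal functions $\seqD f(\sigma)$ and $\seqD h(\tau)$ yields $T_k\big(\seqD f(\sigma)\|\seqD h(\tau)\big)=T_k(\seqD f(\sigma))\|T_k(\seqD h(\tau))$.

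Now, since $f$ is differentiable at $\sigma$ and $h$ at $\tau$, the lemma characterising differentiability through stringwise approximants gives that each $f_k$ is differentiable at $\sigma_{0:k}$ and each $h_k$ at $\tau_{0:k}$; hence each $f_k\|h_k=T_k(f\|h)$ is differentiable at $(\sigma,\tau)_{0:k}$, and that same lemma read backwards shows $f\|h$ is differentiable at $(\sigma,\tau)$. For the derivative I would then compute, for each $k$,
\begin{align*}
  T_k\big(\seqD(f\|h)(\sigma,\tau)\big) &= J(T_k(f\|h))\big((\sigma,\tau)_{0:k}\big) = J(f_k\|h_k)(\sigma_{0:k},\tau_{0:k}) \\
  &= J(f_k)(\sigma_{0:k})\,\|\,J(h_k)(\tau_{0:k}) \\
  &= T_k(\seqD f(\sigma))\,\|\,T_k(\seqD h(\tau)) = T_k\big(\seqD f(\sigma)\|\seqD h(\tau)\big),
\end{align*}
the middle step being the classical parallel rule. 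Since a causal function is determined by its stringwise approximants (the correspondence~\eqref{eqn:caualapprox}, refined by Lemma~\ref{lem:linearApproximants}), the two linear causal functions coincide, which is the claim.

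The only genuine subtlety is the bookkeeping behind $T_k(f\|h)=f_k\|h_k$: an input stream for $f\|h$ interleaves the $\sigma$- and $\tau$-data timestep by timestep, so this identity silently permutes coordinates through the isomorphism $(\R^{n+p})^{k+1}\cong(\R^n)^{k+1}\x(\R^p)^{k+1}$. That permutation is a linear isomorphism applied compatibly to domain and codomain, so it affects neither differentiability nor the block structure used in the classical parallel rule; it is precisely the kind of recasting of sequences the paper has chosen to leave implicit. Everything else is the classical parallel rule pushed through the $T_k$ correspondence, exactly as the chain rule proof pushes the classical chain rule through it.
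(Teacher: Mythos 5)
Your argument is correct and is essentially the paper's own proof: both reduce to the identity $T_k(f\|h)=T_k(f)\|T_k(h)$ and push the classical parallel rule through the stringwise approximants, invoking the correspondence refined in Lemma~\ref{lem:linearApproximants} to conclude the two linear causal functions agree. Your extra remarks on differentiability of $f\|h$ and on the implicit coordinate-interleaving isomorphism only make explicit what the paper leaves tacit.
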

\begin{proof}
  The stringwise approximants of $\seqD(f\| h)(\sigma,\tau)$ and $\seqD
  f(\sigma) \| \seqD h(\tau)$ match:
  \begin{align*}
    T_k(\seqD (f\| h)(\sigma, \tau)) &= J(T_k(f\| h))(\sigma_{0:k}, \tau_{0:k})
    = J(T_k(f)\| T_k(h))(\sigma_{0:k}, \tau_{0:k}) \\
    &= J(T_k(f))(\sigma_{0:k}) \| J(T_k(h))(\tau_{0:k}) &(*)\\
    &= T_k(\seqD f(\sigma)) \| T_k(\seqD h(\tau))
    = T_k(\seqD f(\sigma) \| \seqD h(\tau))
  \end{align*}
  where the starred line is by the classical parallel rule.
\end{proof}

\begin{proposition}[causal linearity]\label{prop:causalLinearRule}
  If $f: (\R^n)^\w \to (\R^m)^\w$ is a linear causal function, it is
  differentiable at every $\sigma \in (\R^n)^\w$ and its derivative is
  $\seqD f(\sigma) = f$.
\end{proposition}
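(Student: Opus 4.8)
The plan is to mirror the proofs of the chain rule and parallel rule: reduce the statement about causal functions to a statement about their finite approximants, where it becomes the familiar fact that a linear map is its own derivative. First I would invoke Lemma~\ref{lem:linearApproximants}: since $f$ is linear, each pointwise approximant $U_k(f)\colon (\R^n)^{k+1}\to\R^m$ is a linear map. A linear map between finite-dimensional spaces is differentiable everywhere, and its Jacobian at any point is the matrix of the map itself, i.e. $J(U_k(f))(x) = U_k(f)$ for all $x\in(\R^n)^{k+1}$. In particular this holds at $x = \sigma_{0:k}$, so every $U_k(f)$ is differentiable at $\sigma_{0:k}$; by Definition~\ref{def:causalDiff} this is exactly the condition that $f$ is differentiable at $\sigma$.

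Next I would identify the derivative. By Definition~\ref{def:causalDiff}, $\seqD f(\sigma)$ is the unique linear causal function satisfying $U_k(\seqD f(\sigma)) = J(U_k(f))(\sigma_{0:k})$ for all $k$. We just observed the right-hand side equals $U_k(f)$. Hence $\seqD f(\sigma)$ and $f$ are both causal functions with the same pointwise approximants, so by the uniqueness half of the correspondence~\eqref{eqn:caualapprox} (refined in Lemma~\ref{lem:linearApproximants}) they are equal: $\seqD f(\sigma) = f$. Since $\sigma$ was arbitrary, $f$ is differentiable at every point and is its own derivative.

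There is essentially no obstacle here; the only thing to be careful about is that one really does need linearity of $f$ (not merely causality) twice over — once to guarantee each $U_k(f)$ is a genuine linear map so that the classical "a linear map is its own Jacobian" fact applies, and once more to ensure that $f$ itself is an admissible value of $\seqD f(\sigma)$, which is required to be a linear causal function. Both uses are supplied directly by Lemma~\ref{lem:linearApproximants}. One could alternatively run the same argument with stringwise approximants $T_k(f)$ using the companion lemma, with no change in substance.
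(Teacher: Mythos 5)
Your proof is correct. The paper actually states this proposition without giving a proof, and your argument is exactly the one its framework intends: reduce to the pointwise approximants via Lemma~\ref{lem:linearApproximants}, use the classical fact that a linear map is everywhere differentiable with Jacobian equal to itself, and then conclude $\seqD f(\sigma) = f$ from the uniqueness of a (linear) causal function with prescribed approximants, as in the correspondence~\eqref{eqn:caualapprox}. Your remark that linearity of $f$ is needed both to apply the classical fact and to make $f$ an admissible candidate for the derivative (which is required to be a linear causal function) is a sensible point of care; nothing is missing.
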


These three results are the fundamental properties of causal differentiation
we will be using. Many other standard rules are consequences of these.
For example, we can derive a sum rule from these properties.

\begin{definition}\label{def:sumOfMaps}
  The \emph{sum} of two causal maps $f, g: (\R^n)^\w \to (\R^m)^\w$ is defined
  to be $f + g \teq {+ \circ (f\| g) \circ \Delta_{(\R^n)^\w}}$, where
  $\Delta_{(\R^n)^\w}$ is the sequence duplication map.
\end{definition}

\begin{proposition}[causal sum rule]\label{prop:causalSumRule}
  If $f$ and $g$ as in Definition~\ref{def:sumOfMaps} are both differentiable
  at $\sigma$, so is their sum and its derivative is $\seqD f(\sigma) + \seqD g(\sigma)$.
\end{proposition}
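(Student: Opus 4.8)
The plan is to derive the sum rule purely as a consequence of the three fundamental rules already established: the chain rule (Proposition~\ref{prop:causalChainRule}), the parallel rule (Proposition~\ref{prop:causalParallelRule}), and causal linearity (Proposition~\ref{prop:causalLinearRule}), applied to the definition $f + g = {+} \circ (f \| g) \circ \Delta_{(\R^n)^\w}$ from Definition~\ref{def:sumOfMaps}. The strategy is to differentiate this composite one layer at a time, checking at each stage that the relevant differentiability hypothesis is met so that the chain rule applies.

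First I would observe that $\Delta_{(\R^n)^\w}: (\R^n)^\w \to (\R^{n+n})^\w$ is a linear causal function, hence by Proposition~\ref{prop:causalLinearRule} it is differentiable everywhere with $\seqD \Delta_{(\R^n)^\w}(\sigma) = \Delta_{(\R^n)^\w}$. Next, since $f$ and $g$ are differentiable at $\sigma$, the parallel rule gives that $f \| g$ is differentiable at $\Delta_{(\R^n)^\w}(\sigma) = (\sigma, \sigma)$ with derivative $\seqD f(\sigma) \| \seqD g(\sigma)$. Then, since $+: (\R^{m+m})^\w \to (\R^m)^\w$ is linear causal (as noted in Example~\ref{ex:streamSumDiff}), Proposition~\ref{prop:causalLinearRule} makes it differentiable everywhere, in particular at $(f \| g)(\sigma, \sigma) = (f(\sigma), g(\sigma))$, with derivative $+$ itself. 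Now two applications of the chain rule — first to $(f \| g) \circ \Delta_{(\R^n)^\w}$, then to ${+}$ composed with that — show $f + g$ is differentiable at $\sigma$ with derivative
\[
\seqD(f+g)(\sigma) = {+} \circ \bigl(\seqD f(\sigma) \| \seqD g(\sigma)\bigr) \circ \Delta_{(\R^n)^\w}.
\]
By Definition~\ref{def:sumOfMaps} applied to the linear causal maps $\seqD f(\sigma)$ and $\seqD g(\sigma)$, the right-hand side is exactly $\seqD f(\sigma) + \seqD g(\sigma)$, which is the claim.

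I do not expect any serious obstacle here; the proof is essentially bookkeeping, and the only point requiring a moment's care is verifying that the chain rule's differentiability hypotheses are satisfied at the correct intermediate points — i.e.\ that one must evaluate $\seqD(f\|g)$ at $(\sigma,\sigma)$ and $\seqD({+})$ at $(f(\sigma),g(\sigma))$, which follows because $\Delta_{(\R^n)^\w}(\sigma) = (\sigma,\sigma)$ and because linear maps are differentiable at every point so the base point for ${+}$ is immaterial. One could alternatively give a direct argument at the level of approximants, writing $U_k(f+g) = U_k(f) + U_k(g)$ and invoking the classical sum rule $J(\varphi+\psi)(x) = J\varphi(x) + J\psi(x)$, mirroring the style of the chain and parallel rule proofs above; but the compositional derivation from the three basic rules is cleaner and matches the stated intent that "many other standard rules are consequences of these."
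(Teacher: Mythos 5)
Your proposal is correct and follows essentially the same route as the paper: the paper's proof likewise expands $f+g = {+}\circ(f\|g)\circ\Delta_{(\R^n)^\w}$ and applies the causal chain rule twice together with linearity of ${+}$ and $\Delta_{(\R^n)^\w}$ and the causal parallel rule, arriving at ${+}\circ(\seqD f(\sigma)\|\seqD g(\sigma))\circ\Delta_{(\R^n)^\w} = \seqD f(\sigma)+\seqD g(\sigma)$. Your extra care about where the intermediate differentiability hypotheses are checked is a harmless (indeed welcome) refinement of the same argument.
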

\begin{proof}
  Using the properties above, we find
  \begin{align*}
    \seqD(f + g)(\sigma) &= \seqD(+ \circ (f\| g) \circ \Delta_{(\R^n)^\w})(\sigma) & \text{(sum of maps def'n)} \\
    &= \seqD(+)((f\| g\circ\Delta_{(\R^n)^\w})(\sigma)) \circ \seqD(f\| g\circ \Delta_{(\R^n)^\w})(\sigma) & \text{(causal chain rule)} \\
    &= + \circ \seqD(f\| g\circ \Delta_{(\R^n)^\w})(\sigma) & \text{(linearity of +)} \\
    &= + \circ \seqD(f\| g)(\Delta_{(\R^n)^\w}(\sigma)) \circ \seqD(\Delta_{(\R^n)^\w})(\sigma) & \text{(causal chain rule)} \\
    &= + \circ \seqD(f\| g)(\sigma, \sigma) \circ \Delta_{(\R^n)^\w} & \text{(def'n \& linearity of $\Delta$)} \\
    &= + \circ (\seqD f(\sigma)\| \seqD g(\sigma)) \circ \Delta_{(\R^n)^\w} & \text{(causal parallel rule)} \\
    &= \seqD f(\sigma) + \seqD g(\sigma) & \text{(sum of maps def'n)}
  \end{align*}
  as desired.
\end{proof}

For functions $f, g: \R^\w \to \R^\w$, we can define their Cauchy and Hadamard
products $f\x g$ and $f\odot g$ with the pattern of
Definition~\ref{def:sumOfMaps} and prove two product rules using the
derivatives of the binary operations $\x$ and $\odot$ we computed earlier.

\begin{proposition}[causal product rules]\label{prop:causalProductRule}
  If $f, g: \R^\w \to \R^\w$ are causal functions differentiable at $\sigma$, so
  are their Cauchy and Hadamard products, and their derivatives are
  \begin{align*}
    \seqD(f\x g)(\sigma)(\Delta\sigma) &= \seqD f(\sigma)(\Delta \sigma)\x g(\sigma) + f(\sigma)\x\seqD g(\sigma)(\Delta \sigma) \\
    \seqD(f\odot g)(\sigma)(\Delta\sigma) &= \seqD f(\sigma)(\Delta \sigma)\odot g(\sigma) + f(\sigma)\odot\seqD g(\sigma)(\Delta \sigma)
  \end{align*}
\end{proposition}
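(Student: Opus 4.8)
The plan is to mimic exactly the derivation of the causal sum rule (Proposition~\ref{prop:causalSumRule}), replacing the role of $+$ by the Cauchy product $\times$ (respectively the Hadamard product $\odot$). Following Definition~\ref{def:sumOfMaps}, define $f \times g \teq {\times} \circ (f \| g) \circ \Delta_{\R^\w}$ and similarly $f \odot g \teq {\odot} \circ (f \| g) \circ \Delta_{\R^\w}$. The two cases are formally identical, so I would carry out the argument once with a generic binary operation $\star \in \{\times, \odot\}$ and note that the computed derivative of $\star$ (Example~\ref{ex:cauchyProductDiff} for $\times$, and the displayed formula just after it for $\odot$) has the same shape in both cases: $\seqD\star(\alpha,\beta)(\Delta\alpha,\Delta\beta) = \Delta\alpha \star \beta + \alpha \star \Delta\beta$.

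First I would establish differentiability of $f \star g$ at $\sigma$: the duplication map $\Delta_{\R^\w}$ is linear causal hence differentiable everywhere (causal linearity, Proposition~\ref{prop:causalLinearRule}); $f \| g$ is differentiable at $\Delta_{\R^\w}(\sigma) = (\sigma,\sigma)$ by the causal parallel rule (Proposition~\ref{prop:causalParallelRule}) since $f$ and $g$ are differentiable at $\sigma$; and $\star$ is differentiable at every point, with derivative as recalled above. So $f \star g$ is a composite of functions differentiable at the appropriate points, and the causal chain rule (Proposition~\ref{prop:causalChainRule}) gives both differentiability and the value of the derivative.

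Then I would compute, for an arbitrary $\Delta\sigma \in \R^\w$:
\begin{align*}
  \seqD(f \star g)(\sigma)(\Delta\sigma)
  &= \bigl(\seqD\star\bigl((f\|g)(\sigma,\sigma)\bigr) \circ \seqD(f\|g)(\sigma,\sigma) \circ \Delta_{\R^\w}\bigr)(\Delta\sigma) && \text{(chain rule, lin.\ of } \Delta) \\
  &= \seqD\star\bigl(f(\sigma), g(\sigma)\bigr)\bigl(\seqD f(\sigma)(\Delta\sigma),\, \seqD g(\sigma)(\Delta\sigma)\bigr) && \text{(parallel rule)} \\
  &= \seqD f(\sigma)(\Delta\sigma) \star g(\sigma) + f(\sigma) \star \seqD g(\sigma)(\Delta\sigma) && \text{(deriv.\ of } \star).
\end{align*}
Specializing $\star$ to $\times$ and to $\odot$ yields the two displayed formulas.

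I do not expect a serious obstacle: every ingredient is already proved in the excerpt, and the only mild care needed is bookkeeping of the arguments through $\Delta_{\R^\w}$ — i.e.\ checking that $\seqD(f\|g)$ is evaluated at $(\sigma,\sigma)$ and that precomposition with the linear map $\Delta_{\R^\w}$ contributes the factor $\Delta_{\R^\w}$ itself (via Proposition~\ref{prop:causalLinearRule}) rather than something point-dependent. This is the same subtlety already handled in the sum rule proof, so it is routine here.
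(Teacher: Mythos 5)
Your proposal is correct and follows exactly the route the paper intends: it states just before the proposition that the product rules are proved "with the pattern of Definition~\ref{def:sumOfMaps}," i.e.\ writing $f\star g = {\star}\circ(f\|g)\circ\Delta_{\R^\w}$ and combining the chain rule, parallel rule, linearity of $\Delta_{\R^\w}$, and the computed derivative of $\times$ (resp.\ $\odot$), with the only change from the sum-rule proof being that linearity of $+$ is replaced by the evaluated derivative $\seqD\star(f(\sigma),g(\sigma))$, which you handle correctly.
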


A typical point of confusion in undergraduate calculus is the role of
constants: sometimes they are treated like elements of the underlying vector
space and sometimes like functions which always return that vector. In our
calculus, a constant can similarly sometimes mean a fixed sequence picked out
by $c: 1 \to (\R^n)^\w$ or the composition of this map after a discarding map
$\final{(\R^n)^\w}: (\R^n)^\w \to 1$. We have described the derivative of a
constant element in Example~\ref{ex:constantElementDiff}, now we treat
constant maps.

\begin{proposition}[causal constant rule]
  The derivative of $\final{(\R^n)^\w}: (\R^n)^\w \to 1$ is $\final{(\R^n)^\w}$.
  If $c: (\R^n)^\w \to (\R^m)^\w$ is a constant map, its derivative is the
  constant map $[0](\sigma) \equiv 0_{(\R^m)^\w}$.
\end{proposition}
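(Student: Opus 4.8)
The plan is to handle the two claims separately, since the first is essentially a degenerate case of the second. For the terminal map $\final{(\R^n)^\w}\colon (\R^n)^\w \to 1$, I would first identify its finite approximants: for each $k$, the approximant $U_k(\final{(\R^n)^\w})$ is the unique function $(\R^n)^{k+1} \to \R^0$, i.e.\ the zero-dimensional map. Its Jacobian at any point $\sigma_{0:k}$ is the unique (empty) linear map $(\R^n)^{k+1} \to \R^0$, which is again $U_k(\final{(\R^n)^\w})$. By Definition~\ref{def:causalDiff} (via the correspondence~\eqref{eqn:caualapprox}), $\final{(\R^n)^\w}$ is therefore differentiable at every $\sigma$ and is its own derivative. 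Alternatively, and more cleanly, I would simply observe that $\final{(\R^n)^\w}$ is a linear causal function — the only map into the terminal object is trivially linear — and invoke Proposition~\ref{prop:causalLinearRule} directly.

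For the constant map $c\colon (\R^n)^\w \to (\R^m)^\w$, the key move is to factor it as $c = e \circ \final{(\R^n)^\w}$, where $e\colon 1 \to (\R^m)^\w$ is the global element picking out the constant sequence value. Then I would apply the causal chain rule (Proposition~\ref{prop:causalChainRule}): $c$ is differentiable at $\sigma$ provided $\final{(\R^n)^\w}$ is differentiable at $\sigma$ (just shown) and $e$ is differentiable at $\final{(\R^n)^\w}(\sigma) = *$. The derivative is then $\seqD e(*) \circ \seqD \final{(\R^n)^\w}(\sigma)$. The derivative $\seqD\final{(\R^n)^\w}(\sigma) = \final{(\R^n)^\w}$ from the first part, and $\seqD e(*) = [0] = 0_{(\R^m)^\w}$ is exactly the content of Example~\ref{ex:constantElementDiff} (a constant element has zero derivative, since each finite approximant is a constant map $1 \to \R^m$ with zero Jacobian). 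Composing, $\seqD c(\sigma) = 0_{(\R^m)^\w} \circ \final{(\R^n)^\w}$, which is the constant map sending every $\sigma$ to $0_{(\R^m)^\w}$, i.e.\ the map denoted $[0](\sigma) \equiv 0_{(\R^m)^\w}$ in the statement.

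If one prefers to avoid the chain rule, the alternative is a direct approximant computation: the pointwise approximant $U_k(c)\colon (\R^n)^{k+1} \to \R^m$ is the constant function with value $e_k$ (the $k$\textsuperscript{th} entry of the constant sequence), so $J(U_k(c))(\sigma_{0:k})$ is the zero matrix for every $\sigma$ and every $k$; assembling these via~\eqref{eqn:caualapprox} gives the zero linear causal function, and since this holds at every $\sigma$, the derivative map $\seqD c$ is the constant map at $0_{(\R^m)^\w}$. I do not anticipate a serious obstacle here — the only subtlety worth spelling out is the distinction the proposition is drawing attention to, namely that $\seqD c$ is a \emph{constant map} $(\R^n)^\w \to (\R^m)^\w$ (not merely the zero element of some space), which is why the factorization-and-chain-rule route is the more illuminating one to present, as it mirrors the finite-dimensional fact that the Jacobian of a constant is the zero matrix at every point.
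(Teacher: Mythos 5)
The paper states this proposition without proof, so there is no official argument to compare against; judged on its own, your proposal is correct and fills the gap. Both of your routes work. The direct one — each pointwise approximant $U_k(c):(\R^n)^{k+1}\to\R^m$ of a constant map is constant, so $J(U_k(c))(\sigma_{0:k})$ is the zero matrix for all $k$ and all $\sigma$, whence $\seqD c(\sigma)$ is the zero linear causal map at every point — is the minimal argument and mirrors the paper's own treatment of constant elements in Example~\ref{ex:constantElementDiff}. Your factorization $c = e\circ\final{(\R^n)^\w}$ with the chain rule is a nice structural alternative, with one point worth making explicit: Propositions~\ref{prop:causalChainRule} and~\ref{prop:causalLinearRule} are stated for maps between spaces of the form $(\R^n)^\w$, so to invoke them for $\final{(\R^n)^\w}$ and $e$ you should identify $1$ with $(\R^0)^\w$ — which you implicitly do when you take the approximants of $\final{(\R^n)^\w}$ to land in $\R^0$; under that identification the terminal map is indeed linear and causal, and $\seqD e(*)$ is the zero global element as in Example~\ref{ex:constantElementDiff}. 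One cosmetic slip at the end: you describe $\seqD e(*)\circ\final{(\R^n)^\w}$ as the constant map sending every $\sigma$ to $0_{(\R^m)^\w}$, whereas strictly, for each fixed base point $\sigma$, this composite is the linear causal map sending every perturbation $\Delta\sigma$ to $0_{(\R^m)^\w}$; the proposition's own notation $[0](\sigma)\equiv 0_{(\R^m)^\w}$ blurs the same distinction, so this does not affect correctness, but it is worth being precise about which argument is being annihilated.
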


\begin{proposition}[causal constant multiple rule]
  If $c: \R^\w \to \R^\w$ is a constant function and $f: \R^\w \to
  \R^\w$ is any other causal function differentiable at $\sigma$, so is
  $c\x f$ and its derivative is $c \x \seqD f(\sigma)$.
\end{proposition}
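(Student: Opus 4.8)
The plan is to reduce this to the Cauchy product rule (Proposition~\ref{prop:causalProductRule}) together with the constant rule. Observe that $c\x f$ is exactly the Cauchy product of the two causal functions $c$ and $f$ in the sense of the pattern of Definition~\ref{def:sumOfMaps}, so Proposition~\ref{prop:causalProductRule} already tells us that $c\x f$ is differentiable at $\sigma$ (since both $c$ and $f$ are, $c$ being constant and hence differentiable everywhere by the constant rule) and that
\begin{align*}
  \seqD(c\x f)(\sigma)(\Delta\sigma)
  &= \seqD c(\sigma)(\Delta\sigma)\x f(\sigma)
   + c(\sigma)\x \seqD f(\sigma)(\Delta\sigma).
\end{align*}
The first step, then, is simply to invoke this and observe that the first summand vanishes: by the causal constant rule, $\seqD c(\sigma) = [0]$ is the constant-zero map, so $\seqD c(\sigma)(\Delta\sigma) = 0_{\R^\w}$, and the Cauchy product of the zero sequence with anything is again the zero sequence (this is immediate from the explicit formula $U_k(\x)(\sigma_{0:k},\tau_{0:k}) = \sum_{i=0}^k \sigma_i\cdot\tau_{k-i}$, or from linearity of $\x$ in each argument). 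Hence $\seqD(c\x f)(\sigma)(\Delta\sigma) = c(\sigma)\x \seqD f(\sigma)(\Delta\sigma)$.

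The remaining step is to match this against the claimed derivative $c\x \seqD f(\sigma)$. Here one must be slightly careful about what $c\x \seqD f(\sigma)$ means: $c$ is being read as the fixed sequence $c(\sigma) \in \R^\w$ (independent of $\sigma$ since $c$ is constant), and $c\x \seqD f(\sigma)$ denotes the linear causal function $\Delta\sigma \mapsto c(\sigma)\x\bigl(\seqD f(\sigma)(\Delta\sigma)\bigr)$, i.e.\ precomposition of $\seqD f(\sigma)$ with the linear map ``Cauchy-multiply by the fixed sequence $c(\sigma)$.'' With this reading the two expressions are literally equal, so the proof closes. It is worth remarking in passing that this linear map is itself causal and linear (multiplication by a fixed sequence is linear, and it is causal because the $k$\textsuperscript{th} output depends only on inputs $0$ through $k$), so $c\x\seqD f(\sigma)$ is a bona fide linear causal function, as required of a causal derivative.

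I do not expect any real obstacle here — the statement is essentially a corollary of the Cauchy product rule once the constant's derivative is known to vanish. The only point needing care is the notational overloading of $c$ (constant \emph{map} versus the fixed \emph{sequence} it picks out), which mirrors the discussion immediately preceding the constant rule in the text; being explicit that $c(\sigma)$ does not vary with $\sigma$ is what makes the first term drop out cleanly. If one preferred a self-contained argument not routed through Proposition~\ref{prop:causalProductRule}, one could instead work directly with approximants: $U_k(c\x f)(\sigma_{0:k}) = \sum_{i=0}^k c_i\cdot [f(\cdot)]_{k-i}$, differentiate in $\sigma_{0:k}$ (the $c_i$ are constants), and read off that $J(U_k(c\x f))(\sigma_{0:k})$ is the matrix of $U_k$ of the map $\Delta\sigma\mapsto c\x\seqD f(\sigma)(\Delta\sigma)$; but the product-rule route is shorter and reuses machinery already in place.
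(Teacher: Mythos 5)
Your proof is correct and follows exactly the paper's route: the paper's entire proof is ``Combine the causal product rule and the causal constant rule,'' which is precisely what you do, just with the vanishing of the constant's derivative and the reading of $c$ as a fixed sequence spelled out explicitly.
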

\begin{proof}
Combine the causal product rule and the causal constant rule.
\end{proof}

\subsection{Implicit causal differentiation}

We have seen the standard rules presented in the last section are useful as
computational shortcuts, just as they are in undergraduate calculus. In the
causal calculus they turn out to be perhaps even more crucial, since some
differentiable causal functions do not have simple closed forms, so trying
to find their derivative from the definition is extremely difficult.

The \emph{stream inverse}~\cite{Rutten_2005} is the first partial causal
function we will consider. This operation is defined on $\sigma\in \R^\w$ such
that $\sigma_0 \neq 0$ with the unbounded-order recurrence relation%
$$[\sigma\inv]_k = \begin{cases} \frac{1}{\sigma_0} & \text{ if } k = 0 \\
  -\frac{1}{\sigma_0}\cdot\displaystyle{\sum_{i = 0}^{k-1}}\left(\sigma_{n-i}\cdot 
  [\sigma\inv]_{i}\right) & \text{ if } k > 0
\end{cases}.$$

Reasoning about this function in terms of its components is extraordinarily
difficult since each component is defined in terms of all the preceding
components. However, there is a useful fact from Rutten~\cite{Rutten_2005}
which we can use to find the derivative of this operation at all $\sigma$
where it is defined: $\sigma \x \sigma\inv = [1]$.

\begin{proposition}[causal reciprocal rule]\label{prop:causalReciprocalRule}
  The partial function $(\cdot)\inv: \R^\w\to\R^\w$ is differentiable at
  all $\sigma \in \R^\w$ such that $\sigma_0 \neq 0$, and its derivative
  is $$(\seqD(\cdot)\inv)(\sigma)(\Delta\sigma) = [-1]\x\sigma\inv\x\sigma\inv\x\Delta\sigma$$
\end{proposition}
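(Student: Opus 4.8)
The plan is to use implicit differentiation via the identity $\sigma \x \sigma\inv = [1]$, exactly as one proves the reciprocal rule in undergraduate calculus, but now using the causal chain rule, the causal product rule, and the derivative of the Cauchy product computed in Example~\ref{ex:cauchyProductDiff}. First I would establish differentiability of $(\cdot)\inv$ at a point $\sigma$ with $\sigma_0 \neq 0$. This is the step I expect to be the main obstacle, since $(\cdot)\inv$ is only a partial function and Definition~\ref{def:causalDiff} requires all pointwise approximants $U_k((\cdot)\inv)$ to be differentiable at $\sigma_{0:k}$. The recurrence shows $[\sigma\inv]_k$ is a rational function of $\sigma_0,\ldots,\sigma_k$ whose only denominators are powers of $\sigma_0$; hence $U_k((\cdot)\inv)$ is smooth on the open set $\{w \in \R^{k+1} : w_0 \neq 0\}$, which contains $\sigma_{0:k}$ whenever $\sigma_0 \neq 0$. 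A clean way to package this: the map $(\cdot)\inv$ restricted to $\{\sigma : \sigma_0 \neq 0\}$ is the unique causal solution of $\sigma \x (\cdot)\inv = [1]$, and one can argue by induction on $k$ that each $U_k$ is differentiable, differentiating the $k$-th component of the Cauchy-product identity.

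Next, assuming differentiability, I would differentiate the identity $\sigma \x \sigma\inv = [1]$. Write $F(\sigma) = \sigma \x \sigma\inv$, i.e. $F = \x \circ \langle \id, (\cdot)\inv\rangle$, the Cauchy product of the identity function and the stream inverse. By the causal product rule (Proposition~\ref{prop:causalProductRule}) with $f = \id$ and $g = (\cdot)\inv$, together with $\seqD \id(\sigma) = \id$ (causal linearity, Proposition~\ref{prop:causalLinearRule}) and $\seqD [1] = [0]$ (Example~\ref{ex:constantElementDiff}), applying $\seqD(\cdot)(\sigma)$ to both sides of $F(\sigma) = [1]$ yields, for every $\Delta\sigma \in \R^\w$,
\[
  \Delta\sigma \x \sigma\inv + \sigma \x \big(\seqD(\cdot)\inv(\sigma)(\Delta\sigma)\big) = [0].
\]

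Finally I would solve this equation for $\seqD(\cdot)\inv(\sigma)(\Delta\sigma)$ using that $\x$ makes $\R^\w$ (with unit $[1]$) into a commutative ring in which every $\sigma$ with $\sigma_0 \neq 0$ is invertible, with inverse $\sigma\inv$. Multiplying both sides by $\sigma\inv$ and using associativity, commutativity, and $\sigma \x \sigma\inv = [1]$ gives
\[
  \seqD(\cdot)\inv(\sigma)(\Delta\sigma) = [-1] \x \sigma\inv \x \sigma\inv \x \Delta\sigma,
\]
which is the claimed formula. The only subtlety to flag is that cancelling $\sigma$ is legitimate precisely because $\sigma_0 \neq 0$ guarantees $\sigma$ is a unit in the Cauchy-product ring; this is the same hypothesis needed for differentiability, so the two halves of the argument use the hypothesis in a coherent way. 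I would also remark that the same computation goes through verbatim for the Hadamard inverse using the Hadamard product rule, but that is not needed here.
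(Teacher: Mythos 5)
Your proposal is correct and follows essentially the same route as the paper's own proof: implicit differentiation of the identity $\sigma\x\sigma\inv=[1]$ using the causal product rule (with $\seqD[1]=[0]$ and the linearity of the identity), and then solving for $(\seqD(\cdot)\inv)(\sigma)(\Delta\sigma)$ using the ring identities of the Cauchy product from Rutten's stream calculus, where $\sigma_0\neq 0$ makes $\sigma$ a unit. The only difference is that you additionally justify differentiability of each approximant $U_k((\cdot)\inv)$ on $\{w : w_0\neq 0\}$ via the rational-function form of the recurrence, a step the paper's proof leaves implicit.
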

\begin{proof} Since $\sigma \x \sigma\inv = [1]$, their derivatives must also be equal. In particular:
  $$[0] = \seqD [1] = \seqD(\sigma\x\sigma\inv)(\Delta\sigma) 
    = \sigma\x(\seqD(\cdot)\inv)(\sigma)(\Delta\sigma) + \Delta\sigma\x(\sigma\inv)$$
  using the causal product rule. Solving this equation for 
  $(\seqD(\cdot)\inv)(\sigma)(\Delta\sigma)$ yields
  $$(\seqD(\cdot)\inv)(\sigma)(\Delta\sigma) =  [-1]\x\sigma\inv\x\sigma\inv\x\Delta\sigma$$
  where we are implicitly using many of the identities established in~\cite{Rutten_2005}.
\end{proof}

When adopting the conventions that $\sigma^{-n} \teq \sigma^{-(n-1)} \x
\sigma\inv$ and $\sigma\x\tau\inv \teq \frac{\sigma}{\tau}$, this rule
looks quite like the usual rule for the derivative of the reciprocal
function: $(J(\cdot)\inv)(x)(\Delta x) = -\frac{\Delta x}{x^2}$.

\begin{proposition}[causal quotient rule]\label{prop:causalQuotientRule}
  If $f, g: \R^\w \to \R^\w$ are causal functions differentiable at $\sigma$ and
  $g(\sigma)_0 \neq 0$, then $\frac{f}{g}$ is also differentiable at $\sigma$ and
  its derivative is
  $$\frac{\seqD f(\sigma)(\Delta\sigma)\x g(\sigma) + [-1]\x f(\sigma)\x\seqD g(\sigma)(\Delta\sigma)}{g(\sigma)^2}.$$
\end{proposition}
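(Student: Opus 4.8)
The plan is to follow the same pattern used for the sum rule and the constant multiple rule: express the quotient $\frac{f}{g}$ as a composite of functions whose derivatives we already know, then apply the chain rule, product rule, and reciprocal rule mechanically. Concretely, by the conventions just introduced, $\frac{f}{g} = f \x g\inv = {\x} \circ (f \| (g\inv \circ g')) \circ \Delta$, where I really just want $\frac{f}{g}$ to be the Cauchy product of $f$ with the composite $(\cdot)\inv \circ g$. So first I would record that $(\cdot)\inv \circ g$ is differentiable at $\sigma$: this follows from the causal chain rule (Proposition~\ref{prop:causalChainRule}) together with the causal reciprocal rule (Proposition~\ref{prop:causalReciprocalRule}), since $g$ is differentiable at $\sigma$ and $(\cdot)\inv$ is differentiable at $g(\sigma)$ precisely because $g(\sigma)_0 \neq 0$. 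Its derivative at $\sigma$ is
$$\seqD((\cdot)\inv \circ g)(\sigma)(\Delta\sigma) = [-1]\x g(\sigma)\inv \x g(\sigma)\inv \x \seqD g(\sigma)(\Delta\sigma).$$

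Next I would apply the causal product rule (Proposition~\ref{prop:causalProductRule}) to $f \x ((\cdot)\inv \circ g)$, which is legitimate since both factors are now known to be differentiable at $\sigma$; this immediately gives that $\frac{f}{g}$ is differentiable at $\sigma$ and yields
$$\seqD\!\left(\tfrac{f}{g}\right)(\sigma)(\Delta\sigma) = \seqD f(\sigma)(\Delta\sigma)\x g(\sigma)\inv + f(\sigma)\x\big([-1]\x g(\sigma)\inv\x g(\sigma)\inv\x\seqD g(\sigma)(\Delta\sigma)\big).$$
The last step is purely algebraic bookkeeping in Rutten's stream calculus: multiply through to put everything over the common "denominator" $g(\sigma)^2 = g(\sigma)\x g(\sigma)$, using commutativity and associativity of the Cauchy product and the identity $g(\sigma)\x g(\sigma)\inv = [1]$. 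The first term becomes $\seqD f(\sigma)(\Delta\sigma)\x g(\sigma)$ over $g(\sigma)^2$, and the second becomes $[-1]\x f(\sigma)\x\seqD g(\sigma)(\Delta\sigma)$ over $g(\sigma)^2$, matching the claimed formula.

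I do not expect any genuine obstacle here — the proposition is essentially a corollary of the reciprocal rule and the product rule, exactly analogous to how the classical quotient rule follows from the reciprocal and product rules. The only point requiring a little care is the domain/differentiability hypothesis: one must check that $(\cdot)\inv$ is actually being applied at a point where it is defined and differentiable, i.e. that the head of $g(\sigma)$ is nonzero, which is exactly the stated hypothesis $g(\sigma)_0 \neq 0$. Everything else is manipulation of identities from~\cite{Rutten_2005}, which we invoke implicitly as was done in the proof of the reciprocal rule.
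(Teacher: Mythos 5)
Your proof is correct and is exactly the argument the paper intends: the paper states the quotient rule without proof, immediately after the reciprocal and product rules, and your derivation — writing $\frac{f}{g}$ as the Cauchy product of $f$ with $(\cdot)\inv \circ g$, applying the causal chain rule plus reciprocal rule (using $g(\sigma)_0 \neq 0$ for differentiability of $(\cdot)\inv$ at $g(\sigma)$), then the product rule and the identity $g(\sigma)\x g(\sigma)\inv = [1]$ to clear denominators — is precisely that intended corollary. The only blemish is the garbled intermediate expression ``$f \| (g\inv \circ g')$'', which you immediately correct to the intended composite $(\cdot)\inv \circ g$, so it does not affect the argument.
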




\subsection{The recurrence rule}

So far, causal differential calculus is rather similar to traditional
differential calculus. There are two different product rules corresponding to
two different products. We were forced to use an implicit differentiation
trick to find the derivative of the reciprocal function, but in the end we
found a familiar result. However, next we state a rule with no traditional
analogue.

\begin{theorem}[causal recurrence rule]\label{thm:causalRecurrenceRule}
  Let $g: \R^n\x\R^m \to \R^m$ be differentiable (everywhere) and $i \in
  \R^m$. Then $\rec_i(g): (\R^n)^\w \to (\R^m)^\w$ is
  differentiable (everywhere) as a causal function and its derivative
  $\Delta\tau \teq [\seqD \rec_i(g)](\sigma)(\Delta\sigma)$ satisfies
  the following recurrence: $$\begin{cases}
    \tau_{k+1} = g(\sigma_{k+1}, \tau_k) &\text{ after } \tau_0 = g(\sigma_0, i) \\
    \Delta\tau_{k+1} = Jg(\sigma_{k+1}, \tau_k)(\Delta\sigma_{k+1}, \Delta\tau_k) &\text{ after }
    \Delta\tau_0 = Jg(\sigma_0, i)(\Delta\sigma_0, 0_{\R^m})
  \end{cases}$$
\end{theorem}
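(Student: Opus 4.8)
The plan is to work through the pointwise approximants $U_k(\rec_i(g))$, show each is differentiable, compute its Jacobian explicitly, and verify that the resulting linear causal function is exactly the one described by the stated recurrence. Recall from Section~\ref{sec:causalDef} that the pointwise approximants satisfy the recurrence $U_k(\rec_i(g))(\sigma_{0:k}) = g(\sigma_k, U_{k-1}(\rec_i(g))(\sigma_{0:k-1}))$, with $U_0(\rec_i(g))(\sigma_0) = g(\sigma_0, i)$. Since $g$ is differentiable everywhere and each $U_k$ is built by composing $g$ with projections and the previous approximant, an induction on $k$ shows $U_k(\rec_i(g))$ is differentiable everywhere; by Definition~\ref{def:causalDiff}, $\rec_i(g)$ is therefore differentiable everywhere as a causal function.

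**Next I would** pin down the derivative. Writing $\tau_k \teq U_k(\rec_i(g))(\sigma_{0:k})$ for the $k$th component of the forward pass, the classical chain rule applied to $U_k(\rec_i(g))(\sigma_{0:k}) = g(\sigma_k, \tau_{k-1})$ gives, for any small-change input $\Delta\sigma_{0:k}$,
\begin{equation*}
  J(U_k(\rec_i(g)))(\sigma_{0:k})(\Delta\sigma_{0:k}) = Jg(\sigma_k, \tau_{k-1})\bigl(\Delta\sigma_k,\, J(U_{k-1}(\rec_i(g)))(\sigma_{0:k-1})(\Delta\sigma_{0:k-1})\bigr),
\end{equation*}
with base case $J(U_0(\rec_i(g)))(\sigma_0)(\Delta\sigma_0) = Jg(\sigma_0, i)(\Delta\sigma_0, 0_{\R^m})$. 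If we define $\Delta\tau_k \teq [U_k(\seqD\rec_i(g)(\sigma))](\Delta\sigma_{0:k})$, then by Definition~\ref{def:causalDiff} this equals $J(U_k(\rec_i(g)))(\sigma_{0:k})(\Delta\sigma_{0:k})$, and the displayed identity becomes precisely $\Delta\tau_k = Jg(\sigma_k, \tau_{k-1})(\Delta\sigma_k, \Delta\tau_{k-1})$ for $k > 0$ and $\Delta\tau_0 = Jg(\sigma_0, i)(\Delta\sigma_0, 0_{\R^m})$ — which is the claimed recurrence. It remains only to note that the $k$th component of $\seqD\rec_i(g)(\sigma)(\Delta\sigma)$ depends on $\Delta\sigma$ only through $\Delta\sigma_{0:k}$ (causality of the linear map), so writing the recurrence in terms of the full sequences $\Delta\sigma$, $\Delta\tau$ is legitimate.

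**The main obstacle** is bookkeeping rather than conceptual: one must be careful that the partial Jacobian $Jg(\sigma_k, \tau_{k-1})$ is being applied to a pair whose second coordinate is itself an inductively-defined vector of the right dimension, and that the block-decomposition $Jg(\sigma_k, \tau_{k-1})(\Delta\sigma_k, \Delta\tau_{k-1}) = J_1g(\sigma_k,\tau_{k-1})\Delta\sigma_k + J_2g(\sigma_k,\tau_{k-1})\Delta\tau_{k-1}$ is exactly what the classical chain rule delivers when differentiating a composite $(\sigma_{0:k}) \mapsto (\sigma_k, \tau_{k-1}(\sigma_{0:k-1})) \mapsto g(\cdots)$; in particular the $\Delta\tau_{k-1}$ term correctly aggregates all the dependence on $\sigma_0, \ldots, \sigma_{k-1}$. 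Once the inductive step is set up with the right types, the verification is a direct application of the classical chain rule at each stage, mirroring the style of the proofs of Propositions~\ref{prop:causalChainRule} and~\ref{prop:causalParallelRule}. A brief remark could observe that the stringwise approximant $T_k(\rec_i(g))$ assembles the components $\tau_0, \ldots, \tau_k$ and its Jacobian assembles the $\Delta\tau_0, \ldots, \Delta\tau_k$, which is why the recurrence captures the whole derivative and not merely its top component.
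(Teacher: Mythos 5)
Your proposal is correct and follows essentially the same route as the paper's proof: induction on $k$ over the pointwise approximants, using the recurrence $U_k(\rec_i(g))(\sigma_{0:k}) = g(\sigma_k, U_{k-1}(\rec_i(g))(\sigma_{0:k-1}))$ together with the classical chain rule to identify $J(U_k(\rec_i(g)))(\sigma_{0:k})(\Delta\sigma_{0:k})$ with $Jg(\sigma_k,\tau_{k-1})(\Delta\sigma_k,\Delta\tau_{k-1})$, with the same base case $Jg(\sigma_0,i)(\Delta\sigma_0,0_{\R^m})$. Your explicit preliminary remark that each $U_k$ is differentiable (hence $\rec_i(g)$ is differentiable as a causal function) is a small addition the paper leaves implicit, but the core argument is identical.
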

\begin{proof}
We check $U_k(\seqD \rec_i(g)(\sigma))(\Delta\sigma_{0:k}) = \Delta\tau_k$ by induction on $k$.
To simplify our notation, we write $u_k \teq U_k(\rec_i(g))$.
The base case is easy:
\begin{align*}
  U_0([\seqD \rec_i(g)](\sigma))(\Delta\sigma_0) &= J(U_0(\rec_i(g)))(\sigma_0)(\Delta\sigma_0) \\
  &= J(\lambda x.g(x, i))(\sigma_0)(\Delta\sigma_0) = Jg(\sigma_0, i)(\Delta\sigma_0, 0_{\R^m})
\end{align*}

The induction step uses the fact that
$u_k(\sigma_{0:k}) = g(\sigma_k, u_{k-1}(\sigma_{0:k-1}))$.
\begin{align*}
  U_k([\seqD \rec_i(g)](\sigma))(\Delta\sigma_{0:k}) &= Ju_k(\sigma_{0:k})(\Delta\sigma_{0:k}) \\
  &= [Jg(\sigma_k, \tau_{k-1})\circ 
  \tuple{J\pi_k(\sigma_{0:k}), J(u_{k-1}\circ \overline{\pi_k})(\sigma_{0:k})}](\Delta\sigma_{0:k}) \\
  &= [Jg(\sigma_k, \tau_{k-1})\circ\tuple{\pi_{k}, Ju_{k-1}(\sigma_{0:k-1})\circ\overline{\pi_k}}](\Delta\sigma_{0:k}) \\
  &= Jg(\sigma_k, \tau_{k-1})(\Delta\sigma_{k}, Ju_{k-1}(\sigma_{0:k-1})(\Delta\sigma_{0:k-1})) \\
  &= Jg(\sigma_k, \tau_{k-1})(\Delta\sigma_{k}, \Delta\tau_{k-1})
\end{align*}
where $\overline{\pi_k}$ is the map discarding the last element of a list.
\end{proof}

Degenerate recurrences, which do not refer to previous values generated by the
recurrence, are a special instance of this rule.

\begin{corollary}[causal map rule]
  Let $h: \R^n \to \R^m$ be a differentiable function. Then $\map(h)$ is 
  differentiable as a causal function, and its derivative is $\map(Jh)$.
\end{corollary}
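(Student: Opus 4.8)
The plan is to obtain this as an immediate specialisation of the causal recurrence rule (Theorem~\ref{thm:causalRecurrenceRule}). Recall from the discussion preceding that theorem that $\map(h) = \rec_i(g)$, where $g\colon \R^n\x\R^m\to\R^m$ is the ``degenerate'' function $g(a,b)=h(a)$ and $i\in\R^m$ is arbitrary. Since $h$ is differentiable everywhere, so is $g$ (its partial derivatives in the second group of variables simply vanish), and hence Theorem~\ref{thm:causalRecurrenceRule} already tells us that $\map(h)=\rec_i(g)$ is differentiable as a causal function. It remains only to identify the derivative.

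To do that I would compute $Jg$. In block form, $Jg(a,b)=\begin{bmatrix}Jh(a) & 0\end{bmatrix}$, where the second block is the $m\x m$ zero matrix, so $Jg(a,b)(\Delta a,\Delta b)=Jh(a)(\Delta a)$ with no dependence on $b$ or on $\Delta b$. Substituting this into the derivative recurrence furnished by Theorem~\ref{thm:causalRecurrenceRule}, the clauses $\Delta\tau_0 = Jg(\sigma_0,i)(\Delta\sigma_0,0_{\R^m})$ and $\Delta\tau_{k+1}=Jg(\sigma_{k+1},\tau_k)(\Delta\sigma_{k+1},\Delta\tau_k)$ both collapse, giving $\Delta\tau_k = Jh(\sigma_k)(\Delta\sigma_k)$ for every $k\in\w$. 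This is exactly the componentwise action of the Jacobian of $h$ on $\Delta\sigma$, i.e.\ $[\seqD\map(h)](\sigma)(\Delta\sigma)=\map(Jh)(\sigma)(\Delta\sigma)$, which is the assertion.

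A slightly more direct route, which avoids invoking the recurrence rule, is to work straight from Definition~\ref{def:causalDiff}: the pointwise approximant $U_k(\map(h))$ is the map $(\sigma_0,\dots,\sigma_k)\mapsto h(\sigma_k)$, whose Jacobian at $\sigma_{0:k}$ is the block row $\begin{bmatrix}0 & \cdots & 0 & Jh(\sigma_k)\end{bmatrix}$; but this is precisely $U_k$ of the componentwise linear causal map determined by $Jh$, so the two causal functions coincide by the uniqueness in the correspondence~\eqref{eqn:caualapprox}.

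I do not foresee any real obstacle; the mathematical content is entirely contained in Theorem~\ref{thm:causalRecurrenceRule} (or, in the alternative argument, in Definition~\ref{def:causalDiff} together with the computation of $U_k(\map(h))$). The one point requiring a word of care is the meaning of the symbol ``$\map(Jh)$'': since $Jh$ evaluated at a point still depends on that point, $\map(Jh)$ must be read, for a fixed base sequence $\sigma$, as the linear causal function $\Delta\sigma\mapsto(Jh(\sigma_k)(\Delta\sigma_k))_{k\in\w}$, and one should note in passing that this is indeed linear and causal — both being immediate, since its $k$-th output depends only on $\sigma_k$ and $\Delta\sigma_k$.
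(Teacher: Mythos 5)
Your proposal is correct and matches the paper's intended argument: the paper gives no separate proof, but the sentence preceding the corollary ("Degenerate recurrences\ldots are a special instance of this rule") indicates exactly your specialisation of Theorem~\ref{thm:causalRecurrenceRule} with $g(a,b)=h(a)$, where $Jg(a,b)(\Delta a,\Delta b)=Jh(a)(\Delta a)$ collapses the derivative recurrence to $\Delta\tau_k = Jh(\sigma_k)(\Delta\sigma_k)$. Your remark clarifying how ``$\map(Jh)$'' is to be read as a linear causal map for each fixed $\sigma$ is a sensible addition but not a departure from the paper's route.
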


To illustrate the recurrence rule, we revisit the running product function,
introduced in Example~\ref{ex:runningProduct}, and compute its derivative.

\begin{example}
  The unary running product function $\prod: \R^\w \to \R^\w$ was defined to
  be $\rec_1(g)$ where $g$ is binary multiplication of reals. In approximant
  form, $U_k(\rec_1(g))(\sigma_{0:k}) = \prod_{i=0}^k \sigma_i$. We compute a
  recurrence for the derivative of this function using the recurrence rule.

  Since $g$ is binary multiplication, $Jg(s, t)(\Delta s, \Delta t) = \Delta s
  \cdot t + s\cdot \Delta t$. By the recurrence rule, $[\seqD
  \rec_i(g)](\sigma)(\Delta\sigma)$ satisfies the recurrence
  $$\begin{cases}
    \tau_{k+1} = \sigma_{k+1}\cdot\tau_k 
    &\text{ after } \tau_0 
    = \sigma_0\\
    \Delta\tau_{k+1} 
    = \Delta\sigma_{k+1}\cdot\tau_k + \sigma_{k+1}\cdot\Delta\tau_k 
    &\text{ after } \Delta\tau_0 
    = \Delta\sigma_0
  \end{cases}$$

  Note that a direct computation of the derivative of this function is
  available since we have a simple form for its pointwise approximants.
  Directly from the definition we would get
  $$\Delta\tau_k = U_k(\seqD \rec_1(g)(\sigma))(\Delta\sigma_{0:k}) 
  = \sum_{i = 0}^k \prod_{j = 0}^k \rho_{ij}$$
  where $\rho_{ij}$ is $\sigma_j$ if $i \neq j$ and $\Delta\sigma_j$ otherwise.

  Used naively, this formula results in $O(k^2)$ real number multiplications,
  and requires access to the entire initial segment of $\sigma$ at all times.
  In contrast, computing the same quantity using the recurrence obtained by
  the recurrence rule requires $O(k)$ multiplications and can be computed
  on-the-fly, requiring only the availability of the first elements of
  $\sigma$ and $\Delta\sigma$ to make initial progress and releasing their
  memory just after use.
\end{example}

\section{An extended example: Elman networks}\label{sec:elman}

We next turn toward a potential application domain of our causal differential
calculus: machine learning. In particular, we demonstrate that it is possible
to use this calculus in the training of recurrent neural networks (RNNs). RNNs
differ from the more common feedforward network in that they are designed to
process sequences of inputs rather than single inputs. This makes them
especially useful in analyzing long texts (sequences of words), spoken
language (sequences of sounds), and videos (sequences of images). In fact,
particular RNN architectures are the core underlying technologies of many
speech recognition products today, such as Alexa and Siri.

In this section, we will be using our causal differential calculus to find the
derivative of a simple kind of recurrent neural network, namely an Elman
network~\cite{Elman_1990}. This is an influential early example of a network
with feedback, though modern feedback networks typically have more structure.
Elman networks can operate on sequences of vectors from $\R^n$, but to keep
things slightly simpler we will consider Elman networks operating on sequences
of real numbers only.

Let $\alpha, \beta, \gamma, \delta, \epsilon \in \R$ be arbitrary parameters
and $\phi_1, \phi_2: \R\to\R$ be arbitrary differentiable ``activation''
functions.\footnote{``Activation'' here has no technical meaning, but carries
a connotation that the function is likely taken from a folklore set of
functions including the sigmoid function, hyperbolic tangent, softplus,
rectified linear unit, and logistic function. Usually these functions have
bounded range, often $[0,1]$.}  Given an input sequence $\sigma \in \R^\w$,
the Elman network defined by these parameters produces the sequence $E(\sigma)
= \tau \in \R^\w$ satisfying the following recurrence: $$\begin{cases}
  \rho_{k+1} = \phi_1(\alpha\sigma_{k+1} + \beta\rho_k + \gamma) 
  &\text{ after } \rho_0 = \phi_1(\alpha\sigma_0 + \gamma) \\
  \tau_{k+1} = \phi_2(\delta\rho_{k+1} + \epsilon) 
  &\text{ after } \tau_0 = \phi_2(\delta\rho_0 + \epsilon)
\end{cases}$$

In our notation, if we define $g_1(x, y) \teq \phi_1(\alpha x + \beta y +
\gamma)$ and $g_2(x) \teq \phi_2(\delta x + \epsilon)$, then $E \teq \map(g_2)
\circ \rec_{0}(g_1)$. We can therefore find the causal derivative of
this Elman network relatively easily using the causal chain rule and causal
recurrence rule. Indeed, letting $\seqD E(\sigma)(\Delta\sigma) = \Delta\tau$,
these rules tell us $\Delta\tau$ satisfies the recurrence:
$$\begin{cases}
  \rho_{k+1} = \phi_1(\alpha\sigma_{k+1} + \beta\rho_k + \gamma) 
  &\text{ after } \rho_0 = \phi_1(\alpha\sigma_0 + \gamma) \\
  \tau_{k+1} = \phi_2(\delta\rho_{k+1} + \epsilon) 
  &\text{ after } \tau_0 = \phi_2(\delta\rho_0 + \epsilon) \\
  \Delta\rho_{k+1} = \phi_1'(\alpha\sigma_{k+1} + \beta\rho_k + \gamma)\cdot
  (\alpha\Delta\sigma_{k+1} + \beta\Delta\rho_k)
  &\text{ after } \Delta\rho_0 = \phi_1'(\alpha\sigma_0 + \gamma)\cdot(\alpha\Delta\sigma_0)\\
  \Delta\tau_{k+1} = \phi_2'(\delta\rho_{k+1} + \epsilon)\cdot(\delta\Delta\rho_{k+1})
  &\text{ after } \Delta\tau_0 = \phi_2'(\delta\rho_0 + \epsilon)\cdot(\delta\Delta\rho_0)
\end{cases}$$

This derivative tells us how we would expect the output of the Elman network
to change in response to a small change $\Delta\sigma$ to its input sequence
$\sigma$. This can be useful information in analyzing the behavior of the
network. However, we can also use causal differentiation to predict how the
network's output would change in response to a small change in one of the
\emph{parameters}, which is a crucial piece of information used when training
the network.

Let us now imagine that we have some data on how this Elman network
\emph{should} behave, in the form of an input/output pair $(\hat{\sigma},
\hat{\tau}) \in \R^\w\x\R^\w$ representing ground truth, and we want
to figure out how to adjust one of the parameters, say $\alpha$, so that our
Elman network better reflects this ground truth.

We can define a causal function related to the Elman network $E$, but where we
now consider $\alpha$ to be a variable and fix $\sigma$ to be $\hat\sigma$.
Denote this function $E_{\hat\sigma}: \R^\w\to\R^\w$ and note that if $\tau =
E_{\hat\sigma}(\hat\alpha)$ for $\hat\alpha \in \R^\w$, then $\tau$ satisfies the
recurrence relation $$\begin{cases}
  \rho_{k+1} = \phi_1(\alpha\hat\sigma_{k+1} + \beta\rho_k + \gamma)
  &\text{ after } \rho_0 = \phi_1(\alpha\hat\sigma_0 + \gamma) \\
  \tau_{k+1} = \phi_2(\delta\rho_{k+1} + \epsilon)
  &\text{ after } \tau_0 = \phi_2(\delta\rho_0 + \epsilon)
\end{cases}$$

We have simplified our expression using the fact that parameters are fixed
values that do not change in the course of the computation of the output
sequence, so $\hat\alpha_k = \alpha$ for all $k \in \w$. Similarly, when we make
small change to this parameter, that small change will remain independent of
the entry in the sequence, so $\widehat{\Delta\alpha}_k = \Delta\alpha$ for all $k$.

We can compute the derivative of this recurrence relation similarly to above,
and find it will satisfy the following recurrence relation:
$$\begin{cases}
  \rho_{k+1} = \phi_1(\alpha\hat\sigma_{k+1} + \beta\rho_k + \gamma)
  &\text{ after } \rho_0 = \phi_1(\alpha\hat\sigma_0 + \gamma) \\
  \tau_{k+1} = \phi_2(\delta\rho_{k+1} + \epsilon)
  &\text{ after } \tau_0 = \phi_2(\delta\rho_0 + \epsilon)\\
  \Delta\rho_{k+1} = \phi_1'(\alpha\hat\sigma_{k+1} + \beta\rho_k + \gamma)\cdot
  (\Delta\alpha\hat\sigma_{k+1} + \beta\Delta\rho_k)
  &\text{ after } \Delta\rho_0 = \phi_1'(\alpha\hat\sigma_0 + \gamma)\cdot(\Delta\alpha\hat\sigma_0) \\
  \Delta\tau_{k+1} = \phi_2'(\delta\rho_{k+1} + \epsilon)\cdot(\delta\Delta\rho_{k+1})
  &\text{ after } \Delta\tau_0 = \phi_2'(\delta\rho_0 + \epsilon)\cdot(\delta\Delta\rho_0)
\end{cases}$$

\begin{example}\label{ex:elmanSpecific}
  Let us take a very specific example to illustrate this process. We
  instantiate the above Elman network with $\alpha = \beta = \delta = 1$,
  $\gamma = 0.1$, $\epsilon = -0.1$ and $\phi_1 = \phi_2$ are both
  the sigmoid function.\footnote{The sigmoid function $\phi: \R\to\R$ is
  defined by $\phi(x) = \frac{1}{1 + e^{-x}}$. The sigmoid function is
  traditionally denoted by $\sigma$, but since we have been using $\sigma$ as
  a sequence variable we use $\phi$.}

  We suppose our ground truth data tells us a sequence starting $\hat\sigma =
  (1, 1, 1, 1, \ldots)$ should be sent to a sequence starting $\hat\tau =
  (0.60, 0.63, 0.63, 0.64, \ldots)$. In reality, our Elman network as
  currently parametrized sends $\hat\sigma$ to $(0.65707, 0.68226, 0.68503, 0.68533,
  \ldots)$, when rounded to 5 decimal places. Our task is to decide how to
  adjust $\alpha$ so that the new network will better match our data, in
  particular reducing every entry by about $0.05$.

  We begin by first writing out the recurrence relation for the derivative of
  $E_{\hat\sigma}$ from above with our particular choice of parameters. Since
  we have chosen many coefficients and all the entries of $\hat\sigma$ to be
  1, there is significant simplification: $$\begin{cases}
    \rho_{k+1} = \phi(\rho_k + 1.1) 
    &\text{ after } \rho_0 = \phi(1.1)\\
    \tau_{k+1} = \phi(\rho_{k+1} - 0.1)
    &\text{ after } \tau_0 = \phi(\rho_0 - 0.1) \\
    \Delta\rho_{k+1} = \phi'(\rho_k + 1.1)\cdot(\Delta\alpha + \Delta\rho_k)
    &\text{ after } \Delta\rho_0 = \phi'(1.1)\cdot\Delta\alpha \\
    \Delta\tau_{k+1} = \phi'(\rho_{k+1} - 0.1)\cdot\Delta\rho_{k+1}
    &\text{ after } \Delta\tau_0 = \phi'(\rho_0 - 0.1)\cdot\Delta\rho_0
  \end{cases}$$

  The only free variable in this recurrence is $\Delta\alpha$. We choose
  $\Delta\alpha = 0.1$, for reasons to be explained later. Then we can compute
  $\Delta\tau = (0.00422, 0.00302, 0.00265, 0.00259, \ldots)$.

  What does this tell us? The recurrence is supposed to compute the derivative
  of $E_{\hat\sigma}$ at 1 and apply the resulting linear map to 0.1. Using
  the interpretation of derivative as approximate change, this suggests that
  if we increase our parameter $\alpha$ from its current value of $1$ by
  $\Delta\alpha = 0.1$, we should expect $E_{\hat\sigma}(1.1)$ to be about
  $E_{\hat\sigma}(1) + (0.00422, 0.00302, 0.00265, 0.00259, \ldots)$. Since
  our goal is to \textbf{reduce} the output of the network, this adjustment is
  not a great idea.

  What are we to do? One option is to pick a new value for $\Delta\alpha$ and
  recompute the approximate change, but there is a smarter way. We know that
  the derivative of $E_{\hat\sigma}$ at 1 is linear, so if we instead
  \emph{decrease} $\alpha$ by 0.1, we would expect $E_{\hat\sigma}(0.9)$ to be
  about $E_{\hat\sigma}(1) - (0.00422, 0.00302, 0.00265, 0.00259, \ldots) =
  (0.65285, 0.67923, 0.68238, 0.68274, \ldots)$. Indeed, after making this
  adjustment, we find $E_{\hat\sigma}(0.9) = (0.65273, 0.67908, 0.68224,
  0.68261, \ldots)$. This adjustment ended up decreasing the result by about
  0.00015 more than we predicted, which amounts to approximately a 5\%
  overshot of the original prediction.

  While it is nice to know our prediction about the change was fairly
  accurate, subtracting 0.1 from $\alpha$ has not achieved our goal: in each
  component, our Elman network's output decreased by at most 0.005 while we
  were trying to create a reduction of 0.05. A natural idea here would be to
  \emph{really} exploit the linearity of the derivative and make a bigger
  adjustment to $\alpha$, namely subtracting
  $\frac{0.05}{0.005}\cdot\Delta\alpha = 10\cdot\Delta\alpha = 1$. Computing
  $E_{\hat\sigma}(0)$, we find it is actually $(0.60467, 0.63445, 0.64095,
  0.64235, \ldots)$, which is much closer to our goal than $E_{\hat\sigma}(0.9)$
  turned out to be.

  This seems like good news, but if we check the accuracy of the prediction
  our derivative makes, we would find that the actual reduction from
  $E_{\hat\sigma}(1)$ to $E_{\hat\sigma}(0)$ is between 25\% and 65\% greater
  than the derivative predicted. Thus, though we were able to make greater
  progress aligning our network with ground truth, the bigger adjustment came
  with much greater error. This is a classic tradeoff in neural network
  training: the linear approximation provided by the derivative is only valid
  locally, so taking bigger steps along the gradient comes with potentially
  greater rewards in terms of improvements in network performance but also
  carries extra risk that greater error could lead the training astray.
\end{example} 

\section{Conclusion, related work, and future directions}\label{sec:conclusion}

In this paper, we presented a basic differential calculus for causal functions
between sequences of real-valued vectors. We gave a definition of derivative
for causal functions, showed how to compute derivatives from this definition,
established many classical rules from multivariable calculus including the
chain, parallel, sum, product, reciprocal, and quotient rules. We additionally
showed a rule unique to the causal calculus: the recurrence rule. We then
showed how to use these rules in a practical example, namely the training of
an Elman network.

\emph{Related work.} We are not aware of other works directly treating
differentiation of causal functions, though we suspect there may be
connections to hard-core analysis literature. This work is obviously inspired
in results and structure by standard undergraduate multivariable calculus,
e.g.~\cite{spivak1965calculus}. We also have a related categorical treatment
of differentiation of causal functions~\cite{sprungerLICS2019} using the
framework of Cartesian differential categories~\cite{Blute09}. That is much
more abstract than the present work, but when concretized to the current
scenario would only apply to smooth causal functions.

Though we drew our example differentiable functions almost exclusively from
Rutten's stream calculus~\cite{Rutten_2005}, we would also like to point out
signal flow graphs as another interesting treatment of causal functions. an
interesting graphical representation of causal functions, investigated in
e.g.~\cite{Basold2014,DBLP:conf/concur/BonchiSZ14,DBLP:conf/popl/BonchiSZ15,
DBLP:conf/lics/Milius10}. We expect that interpreting our differential
calculus in this setting could yield a treatment of differentiation in string
diagrams.

We suspect recurrence rule we obtained, particularly when differentiating
Elman networks, may also have connections to the automatic differentiation
literature we are not aware of at this time. In particular, it does rather
seem like the recurrence rule augments a recurrence with dual numbers.

\emph{Future directions.} As neural networks become more advanced and
practitioners find new and interesting ways of using gradients of these
networks, we believe theoreticians have a role to play in systematizing the
theory of these new applications of derivatives. We believe that the coalgebra
community, as experts with many tools for understanding programs operating on,
infinite data structures, are particularly well-positioned to help develop
these theories. For example, nearly every rule of causal differentiation we
established here relies on a coalgebraically-derived property from Rutten's
stream calculus~\cite{Rutten_2005}. We looked at functions on sequences in
particular, but we have every reason to believe further results are possible
for more advanced neural network architectures on more exotic infinite data
structures.

We are particularly interested in merging our results here with a line of
research initiated in~\cite{sprungerLICS2019} using Cartesian differential
categories. We believe this causal calculus could be an instance of a
Cartesian differential \emph{restriction}
category~\cite{cockett2011differential}, which would drastically improve the
scope of our previous results to cover partial and non-smooth causal
functions.

\bibstyle{plain}
\bibliography{bib}

\end{document}